\let\cl@part\relax  
  \providecommand{\STATE}{\State}
  \providecommand{\FOR}{\For}
  \providecommand{\ENDFOR}{\EndFor}
\newenvironment{proof}[1][Proof]{\par\noindent\textbf{#1.}\ }{\hfill$\square$\par}
\def\@begintheorem#1#2{%
  \trivlist
  \item[\hskip\labelsep{\bfseries #1\ #2}]\itshape}
\def\@opargbegintheorem#1#2#3{%
  \trivlist
  \item[\hskip\labelsep{\bfseries #1\ #2\ (#3)}]\itshape}
\newtheorem{theorem}{Theorem}
\newaliascnt{lemma}{theorem}
\newtheorem{lemma}[lemma]{Lemma}
\newaliascnt{proposition}{theorem}
\newtheorem{proposition}[proposition]{Proposition}
\newaliascnt{corollary}{theorem}
\newtheorem{corollary}[corollary]{Corollary}
\newaliascnt{conjecture}{theorem}
\newaliascnt{definition}{theorem}
\newaliascnt{remark}{theorem}
\newtheorem{remark}[remark]{Remark}
\newaliascnt{assumption}{theorem}
\newtheorem{assumption}[assumption]{Assumption}
\newaliascnt{hypothesis}{theorem}
\newaliascnt{property}{theorem}
\Crefname{figure}{Fig.}{Figs.}
\crefname{lemma}{Lemma}{Lemmas}
\Crefname{lemma}{Lemma}{Lemmas}
\crefname{proposition}{Proposition}{Propositions}
\Crefname{proposition}{Proposition}{Propositions}
\crefname{corollary}{Corollary}{Corollaries}
\Crefname{corollary}{Corollary}{Corollaries}
\crefname{definition}{Definition}{Definitions}
\Crefname{definition}{Definition}{Definitions}
\crefname{remark}{Remark}{Remarks}
\Crefname{remark}{Remark}{Remarks}
\crefname{assumption}{Assumption}{Assumptions}
\Crefname{assumption}{Assumption}{Assumptions}
\crefname{conjecture}{Conjecture}{Conjectures}
\Crefname{conjecture}{Conjecture}{Conjectures}
\crefname{hypothesis}{Hypothesis}{Hypotheses}
\Crefname{hypothesis}{Hypothesis}{Hypotheses}
\crefname{property}{Property}{Properties}
\Crefname{property}{Property}{Properties}
\let\half\undefined
\providecommand{\runauthor}[1]{}%
\providecommand{\runtitle}[1]{}%
\providecommand{\thanksref}[1]{}%
\begin{document}

\title{Disturbance Attenuation Regulator II: Stage Bound Finite Horizon Solution\thanks{The authors gratefully acknowledge the financial support of the National Science Foundation (NSF) under Grant Nos. 2027091 and 2138985.}}

\author{Davide Mannini\thanks{Department of Chemical Engineering, University of California, Santa Barbara. Email: dmannini@ucsb.edu} \and James B. Rawlings\thanks{Department of Chemical Engineering, University of California, Santa Barbara. Email: jbraw@ucsb.edu}}

\maketitle

\begin{abstract}
This paper develops a generalized finite horizon recursive solution to the discrete time stage bound disturbance attenuation regulator (StDAR) for state feedback control. This problem addresses linear dynamical systems subject to stage bound disturbances, i.e., disturbance sequences constrained independently at each time step through stagewise squared two-norm bounds. The term generalized indicates that the results accommodate arbitrary initial states. By combining game theory and dynamic programming, this work derives a recursive solution for the optimal state feedback policy. The optimal policy is nonlinear in the state and requires solving a tractable convex optimization for the Lagrange multiplier vector at each stage; the control is then explicit. For systems with constant stage bound, the problem admits a steady-state optimization expressed as a tractable linear matrix inequality (LMI) whose empirical computational cost is approximately cubic in $n$. Numerical examples illustrate the properties of the solution.

This work provides a complete feedback solution to the StDAR for arbitrary initial states. Companion papers address the signal bound disturbance attenuation regulator (SiDAR): the finite horizon solution in Part~I-A and convergence properties in Part~I-B.
\end{abstract}

\section{Introduction}
\label{sec:intro}

The disturbance attenuation regulator (DAR), alternatively termed the sensitivity minimization problem, formulates robust control as a deterministic minmax game between controller and disturbance. In this sequential dynamic noncooperative zero-sum game (a Stackelberg game), the disturbance optimizes first (follower) and the control optimizes second (leader). The control objective is to maintain low cost despite any admissible bounded disturbance. These problems present fundamental mathematical challenges. Witsenhausen \citep{witsenhausen:1968} noted that only weak duality holds in relevant cases. Moreover, global optima need not satisfy stationarity conditions, i.e., points in the domain of a function at which the gradient is zero, and gradient based algorithms may fail because search domain restrictions can inadvertently eliminate solution branches, blocking convergence even when portions of the solution lie within the search region.

Two constraints bound disturbances differently. The signal bound disturbance attenuation regulator (SiDAR) constrains total energy through a single squared signal two-norm bound over all time steps. The stage bound disturbance attenuation regulator (StDAR) independently constrains energy at each time step through stagewise squared two-norm bound. These differences produce distinct computational requirements and optimal policies.

The StDAR addresses a fundamental limitation of the SiDAR. The aggregate signal bound constraint imposes temporal coupling: the admissible disturbance at each stage depends on all past disturbance realizations through the remaining budget. This coupling treats the disturbance as an adversarial agent that strategically allocates energy from a fixed budget to maximize system cost. While this game-theoretic perspective provides valuable theoretical insights, it is less representative of physical disturbances in practical applications. Actual disturbances arising from environmental conditions, measurement noise, or model uncertainty do not possess knowledge of past realizations, nor do they strategically coordinate to deplete an energy budget. Stage bound constraints eliminate this artificial coupling by independently bounding disturbance magnitude at each stage. This problem naturally accommodates persistent disturbances that act consistently over time or time-varying disturbances whose bound change with operating conditions, without requiring the disturbance to implement a coordinated strategy across the horizon. Consequently, the StDAR seems more suitable for designing controllers robust to the types of uncertain disturbances encountered in practical control applications.

Bulgakov \citep{bulgakov:1946} established early foundations for stage bound methods in the 1940s by determining maximum terminal deviations under stagewise input constraints. Soviet/Russian researchers developed this work extensively (it remains known as \textit{Bulgakov's problem}) yet Western research remained largely disconnected from these developments.

Game-theoretic DAR problems emerged in the 1960s. Stikhin \citep{stikhin:1963} and Gadzhiev \citep{gadzhiev:1962} independently addressed stage bound and signal bound cases respectively, with Gadzhiev deriving nonlinear optimal policies for linear systems. Following Dorato and Drenick's \citep{dorato:drenick:1966} introduction of these concepts to Western audiences, extensive research ensued throughout the 1960s-1970s \citep{koivuniemi:1966,ragade:sarma:1967,salmon:1968,rhodes:luenberger:1969,kimura:1970,medanic:andjelic:1971,ulanov:1971,bertsekas:rhodes:1973,yakubovich:1975,barabanov:granichin:1984}. Progress stalled as weak duality and nonstationary global solutions prevented complete solutions for either constraint paradigm.

Frequency domain reformulations revitalized research in the 1980s. Zames \citep{zames:1981} defined a frequency domain optimization problem that proved equivalent to the time domain SiDAR, introducing the attenuation level $\gamma$ (a Lagrange multiplier analogue). This $H_{\infty}$ framework shifted Western research attention predominantly toward signal bound problems, though stage bound problems continued to receive attention in Soviet and Russian literature. Glover and Doyle \citep{glover:doyle:1988} developed corresponding time domain dual Riccati recursions for continuous time systems at the origin for signal bound disturbances. Basar \citep{basar:1989a} provided finite and infinite horizon dynamic game problems for the signal bound case. Vidyasagar \citep{vidyasagar:1986} extended the time domain $H_{\infty}$ theory to stage bound disturbances, reintroducing this problem to Western literature. These advances assumed zero initial state, however, disconnecting them from earlier game-theoretic treatments.

Dorato \citep{dorato:1987} and Khlebnikov, Polyak, and Kuntsevich \citep{khlebnikov:polyak:kuntsevich:2011} provide historical perspectives emphasizing Western and Soviet/Russian contributions respectively.

Currently, no direct feedback solution exists for the StDAR that is valid throughout the entire state space. Existing methods derive solutions only at the origin. This limitation has practical consequences: large disturbances or setpoint changes drive systems away from the origin where existing solutions become suboptimal.

We derive a generalized, finite horizon, recursive feedback solution to the StDAR with:
\begin{itemize}
\item validity throughout the state space, deriving the optimal state feedback policy for any state
\item optimal state feedback policies that are nonlinear in the state and require solving a tractable convex optimization for the Lagrange multiplier vector at each stage; the control gain is then explicit
\end{itemize}

\Cref{sec:setup} formulates the finite horizon StDAR. \Cref{sec:stage} develops the recursive solution via backward dynamic programming. \Cref{sec:ss-stage} introduces the steady-state StDAR for systems with constant stage bound and derives its tractable LMI, whose empirical computational cost is approximately cubic in $n$. \Cref{sec:num-example} illustrates the theory with numerical examples, and \Cref{sec:end} closes the paper with a summary of the main findings. The appendix compiles fundamental results used throughout the paper.

Companion papers address the finite horizon SiDAR solution \citep{mannini:rawlings:2026a} and the steady-state problem and convergence properties for the SiDAR solution \citep{mannini:rawlings:2026b}.

\textit{Notation:} Let $\bbR$ denote the reals and $\bbI$ the integers. $\mathbb{R}^{m \times n}$ denotes the space of $m \times n$ real matrices and $\mathbb{S}^n$ denotes the space of $n \times n$ real symmetric positive definite matrices. The \(\norm{x}\) denotes the Euclidean norm of vector \(x\) and \(\norm{M}\) denotes the induced 2-norm of matrix \(M\). For a vector $w \in \mathbb{R}^p$, let $\wseq$ denote a sequence $\wseq \eqbyd (w(0), w(1), \dots, w(N-1))$ over a finite horizon $N$. The norm of a signal $\wseq$ is defined as $\smax{\wseq} \eqbyd ( \sum_{k=0}^{N-1} \norm{w(k)}^2 )^{1/2}$. The column space (range) and nullspace of a matrix $M$ are denoted by $\mathcal{R}(M)$ and $\mathcal{N}(M)$, respectively. The pseudoinverse of a matrix $M$ is denoted as $M^{\dagger}$. For symmetric matrices $A$ and $B$, $A \succeq B$ (respectively $A \preceq B$) denotes $A - B$ is positive semidefinite (respectively negative semidefinite).

\section{StDAR Set Up}
\label{sec:setup}
Consider the following discrete time system
\begin{equation}
    x^+ = Ax + Bu + Gw \label{system}
\end{equation}  
in which $x \in \bbR^n$ is the state, $u \in \bbR^m$ is the control, $w \in \bbW \subset \bbR^q$ is a disturbance, and $x^+$ is the successor state. Denote the horizon length i.e., number of time steps in the future, as $N \in \mathbb{I}_{\ge 1}$. Define the control and disturbance sequences: $\useq \eqbyd (u_0,u_1,\dots,u_{N-1})$, $\wseq \eqbyd (w_0,w_1,\dots,w_{N-1})$. Consider the following stage bound disturbance constraint set (stagewise two-norm bound)
\[
	\bbW_{st}
	\eqbyd
	\Bigl\{
	\wseq
	 \mid
	|w_k|^2 \leq \alpha_k
	\Bigr\}
	\qquad k \in[0,1,\dots,N-1]\] 
where $\alpha_k > 0$ for all $k$. Define the following objective function
\begin{equation}
    V(x_0, \useq, \wseq) = \sum_{k=0}^{N-1} \ell(x_k, u_k)  + \ell_f(x_N) \label{maincost}
\end{equation}
where $x_0$ is the initial state, $\ell(\cdot)$ the stage cost, $\ell_f(\cdot)$ the terminal cost
\[
    \ell(x,u) = (1/2)x'Qx + (1/2)u'Ru \qquad \ell_f(x) = (1/2)x'P_fx
\]
in which $Q \succeq 0$, $R \succ 0$, and $P_f \succeq 0$. We state the following assumptions.
\begin{assumption}
For the linear system \eqref{system}, $(A,B)$ stabilizable and  $(A,Q)$ detectable. \label{asst1}
\end{assumption}
\begin{assumption}
$\mathcal{R}(G)\subseteq\mathcal{R}(B)$. \label{asst2}
\end{assumption}
\begin{assumption}
$G'P_fG\neq0$. \label{asst3}
\end{assumption}
\begin{assumption}
$Q\succ0$, $P_f \succ0$. \label{asst4}
\end{assumption}

Let $\overline{\alpha} \eqbyd \sum_{k=0}^{N-1}\alpha_k > 0$. We define a disturbance attenuation optimization, referred to as the \textit{stage bound disturbance attenuation regulator} (StDAR)
\begin{equation}
    V^*(x_0) \eqbyd \min_{u_0}\max_{w_0} \; \min_{u_1}\max_{w_1} \; \cdots \min_{u_{N-1}}\max_{w_{N-1}} \;
 \frac{V(x_0, \useq, \wseq)}{\overline{\alpha}} \; \; \wseq \in \bbW_{st} \label{stagedp-w}
\end{equation}
subject to \eqref{system}. The optimization \eqref{stagedp-w} is well-defined since $V$ is continuous on the compact set $\bbW_{st}$. Under \cref{asst3}, the maximum over the disturbance is attained on the stagewise boundary $|w_k|^2 = \alpha_k$ for each $k$, as shown in the proofs of \cref{prop:2stage-StDAR,prop:ndstage}. At this boundary $\sum_{k=0}^{N-1}|w_k|^2 = \overline{\alpha}$, so the normalized cost $V^*(x_0)$ equals the attenuation ratio $V(x_0,\useq,\wseq) / \sum_{k=0}^{N-1}|w_k|^2$ at the worst case.

\begin{remark}[Relation to classical control objectives]
The StDAR objective \eqref{stagedp-w} measures the worst-case time-averaged energy of the regulated state and control under adversarial stagewise disturbance bounds $|w_k|^2 \leq \alpha_k$. This differs from $H_\infty$ control, which measures the worst-case state energy under a single disturbance energy bound over the horizon, and from $H_2$ control, which measures expected output power under a stochastic disturbance. The averaging by $\overline{\alpha}$ is essential: a uniformly bounded disturbance can inject unbounded total energy into the linear system over an infinite horizon. The stagewise adversarial framing and resulting nonlinear optimal state feedback are reminiscent of persistent bounded disturbance control \citep{vidyasagar:1986,dahleh:diazbobillo:1995}, but the StDAR uses quadratic energies for both the disturbance bound and the regulated cost rather than maximum magnitudes.
\end{remark}

\subsection{Dynamic Programming}
The StDAR admits a standard Bellman recursion due to the stagewise separable constraint $|w_k|^2 \leq \alpha_k$. Throughout this section, we work with the equality constraint $|w|^2 = \alpha_k$, which is justified by the boundary activation arguments in \cref{prop:2stage-StDAR,prop:ndstage}. This enables the constant denominator $\overline{\alpha} = \sum_{k=0}^{N-1} \alpha_k$.

Define the value function $V_k: \bbR^n \to \bbR$ satisfying
\begin{equation}
V_k(x) = \min_u \max_{|w|^2 = \alpha_k} \left[ \frac{1}{\overline{\alpha}}\ell(x,u) + V_{k+1}(Ax+Bu+Gw) \right] \label{bellman-recursion}
\end{equation}
for $k \in \{0,\ldots,N-1\}$, where $\overline{\alpha} = \sum_{k=0}^{N-1} \alpha_k$, and with the boundary condition at $k=N$ given by
\begin{equation*}
V_N(x) = \frac{1}{\overline{\alpha}} \ell_f(x)
\end{equation*}
The optimal control policy at stage $k$ is
\begin{equation}
u_k^*(x) = \arg\min_u \max_{|w|^2 = \alpha_k} \left[ \frac{1}{\overline{\alpha}}\ell(x,u) + V_{k+1}(Ax+Bu+Gw) \right] \label{opt-control-policy}
\end{equation}
Substituting $u_k^*(x)$ into \eqref{bellman-recursion} yields
\begin{equation*}
V_k(x) = \max_{|w|^2 = \alpha_k} \left[ \frac{1}{\overline{\alpha}}\ell(x,u_k^*(x)) + V_{k+1}(Ax+Bu_k^*(x)+Gw) \right] 
\end{equation*}
and the optimal disturbance policy is
\begin{equation}
w_k^*(x) = \arg\max_{|w|^2 = \alpha_k} \left[ \frac{1}{\overline{\alpha}}\ell(x,u_k^*(x)) + V_{k+1}(Ax+Bu_k^*(x)+Gw) \right] \label{opt-dist-policy}
\end{equation}
The inner maximization in \eqref{bellman-recursion} is a constrained quadratic optimization over the compact set $\bbW_k = \{w : |w|^2 = \alpha_k\}$. We evaluate this maximization in closed form by introducing Lagrange multipliers $\lambda_k$ for the stagewise constraints, which yields the Riccati recursions and multiplier optimizations established in \cref{sec:stage}.

\section{StDAR Solution}
\label{sec:stage}

\subsection{Two-stage Solution}
We solve the two-stage version of the StDAR \eqref{stagedp-w} for the linear system \eqref{system}. The two-stage problem demonstrates how the Lagrange multipliers $(\lambda_0, \lambda_1)$ are introduced at each stage to evaluate the constrained maximizations in the Bellman recursion \eqref{bellman-recursion} in closed form.

A two-stage StDAR is
\begin{equation}
    V^*(x_0) \eqbyd \min_{u_0}\max_{w_0} \; \min_{u_1}\max_{w_1} \;
 \frac{V(x_0, \useq, \wseq)}{\overline{\alpha}} \quad \wseq \in \bbW_{st} \label{2stagedp-w}
\end{equation}
where $\useq \eqbyd (u_0,u_1)$, $\wseq \eqbyd (w_0,w_1)$, $\overline{\alpha} \eqbyd \alpha_0 + \alpha_1$, and the stage bound constraint set is
\[
\bbW_{st} \eqbyd \Bigl\{\wseq \mid |w_k|^2 \leq \alpha_k, \ k \in \{0,1\}\Bigr\}
\]
The objective function is
\[
V(x_0, \useq, \wseq) = (1/2)\bigg( x_0'Qx_0  + u_0'Ru_0 + x_1'Qx_1 + u_1'Ru_1 + x_2'P_fx_2\bigg)
\]

\Cref{prop:2stage-StDAR} states the resulting solution: the optimal multipliers $(\lambda_0^*, \lambda_1^*)$ minimize a convex value function over a feasibility domain $\Lambda_2$ that the proposition also constructs, and the optimal control and disturbance policies follow from the stationary conditions of the stacked Lagrangian evaluated at $(\lambda_0^*, \lambda_1^*)$.

\begin{proposition}[Two-stage StDAR]
\label{prop:2stage-StDAR}
Let Assumptions~1--3 hold; the matrices $M_1(\lambda_1)$ and $\Pi_1(\lambda_1)$ used below are defined later in this proposition. Define
\[
\Lambda_2 \eqbyd \Big\{(\lambda_0,\lambda_1)\in\bbR^2:
\ \lambda_1\ge\norm{G'P_fG},\
\lambda_0\ge\norm{G'\Pi_1(\lambda_1)G}\Big\}
\]
\[
\overline{\alpha} =\alpha_0 + \alpha_1
\]
Consider the convex optimization \eqref{eq:2stage-opt}
\begin{equation}
\min_{(\lambda_0,\lambda_1)\in\Lambda_2}\ 
\frac{1}{2\overline{\alpha}}\,x_0'\,\Pi_0(\lambda_0,\lambda_1)\,x_0
+\frac{1}{2\overline{\alpha}}\big(\alpha_0\lambda_0+\alpha_1\lambda_1\big) \label{eq:2stage-opt}
\end{equation}
where, for $\lambda_1\in\bbR$
\begin{gather*}
M_1(\lambda_1)\eqbyd
\begin{bmatrix}
B'P_fB+R & B'P_fG\\
(B'P_fG)' & G'P_fG-\lambda_1 I
\end{bmatrix}\\
\Pi_1(\lambda_1)\eqbyd
Q+A'P_fA
- A'P_f\!\begin{bmatrix}B & G\end{bmatrix}
M_1(\lambda_1)^{-1}
\begin{bmatrix}B'\\ G'\end{bmatrix}\!P_f A
\end{gather*}
Given $\lambda_1$, define
\begin{gather*}
M_0(\lambda_0,\lambda_1)\eqbyd
\begin{bmatrix}
B'\Pi_1(\lambda_1)B+R & B'\Pi_1(\lambda_1)G\\
(B'\Pi_1(\lambda_1)G)' & G'\Pi_1(\lambda_1)G-\lambda_0 I
\end{bmatrix}
\end{gather*}
\begin{align*}
\Pi_0(\lambda_0,\lambda_1) &\eqbyd Q+A'\Pi_1(\lambda_1)A\\
&\quad - A'\Pi_1(\lambda_1)\!\begin{bmatrix}B & G\end{bmatrix}
M_0(\lambda_0,\lambda_1)^{-1}\begin{bmatrix}B'\\ G'\end{bmatrix}\!\Pi_1(\lambda_1) A
\end{align*}
Given the solution to the convex optimization \eqref{eq:2stage-opt}, $(\lambda_0^*,\lambda_1^*)$, and terminal condition $P_f \succeq0$, then
\begin{enumerate}
\item The optimal control policies $u^*_0(x_0)$ and $u^*_1(x_1)$ from \eqref{opt-control-policy} satisfy the stationary conditions
\begin{align}
M_0(\lambda_0^*,\lambda_1^*)
\begin{bmatrix} u_0 \\ z_0 \end{bmatrix}^* &=- \begin{bmatrix} B' \\ G'\end{bmatrix} \Pi_1(\lambda_1^*)A \; x_0 \label{2stage-u0} \\
M_1(\lambda_1^*)
\begin{bmatrix} u_1 \\ z_1 \end{bmatrix}^* &= -\begin{bmatrix} B' \\ G'\end{bmatrix} P_fA \; x_1 \label{2stage-u1}
\end{align}
where $z_k$ denotes the Lagrangian stationary disturbance from the unconstrained stationary conditions, computed by the second block of \eqref{2stage-u0}--\eqref{2stage-u1}; the constrained optimal disturbance $w_k^*(x_k)$ is given separately in item~2. By \cref{prop:range-inclusion-invertible} and \cref{asst2,asst3}, $M_0$ and $M_1$ are invertible, so \eqref{2stage-u0}--\eqref{2stage-u1} determine $(u_k^*, z_k^*)$ uniquely.
\item The optimal disturbance policies $w^*_0(x_0) = \overline{w}_0 \cap \bbW_0$ and $w^*_1(x_1) = \overline{w}_1 \cap \bbW_1$ from \eqref{opt-dist-policy} satisfy
\begin{align}
\begin{split}
(B'\Pi_1(\lambda_1^*)G)'u^*_0(x_0) &+(G'\Pi_1(\lambda_1^*)G - \lambda_0^* I) \ \overline{w}_0 = - G'\Pi_1(\lambda_1^*)Ax_0
\end{split}
\label{2stage-w0} \\
\begin{split}
(B'P_fG)'u^*_1(x_1) &+( G'P_fG - \lambda_1^* I ) \ \overline{w}_1 = - G'P_fAx_1
\end{split}
\label{2stage-w1}
\end{align}
where $\overline{w}_0$ and $\overline{w}_1$ are the sets of solutions to \eqref{2stage-w0} and \eqref{2stage-w1} respectively, and $\bbW_k = \{w_k : |w_k|^2 = \alpha_k\}$.
\item The optimal cost to \eqref{2stagedp-w} is
\begin{equation}
V^*(x_0) = \frac{1}{2\overline{\alpha}}\,x_0'\,\Pi_0(\lambda_0^*,\lambda_1^*)\,x_0
+\frac{1}{2\overline{\alpha}}\big(\alpha_0\lambda_0^*+\alpha_1\lambda_1^*\big) \label{2stage-cost}
\end{equation}
\item For all $(\lambda_0,\lambda_1)\in\Lambda_2$, we have that $\Pi_0(\lambda_0,\lambda_1)\succeq0$ and $\Pi_1(\lambda_1)\succeq0$.
\end{enumerate}
\end{proposition}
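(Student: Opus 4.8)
The plan is to solve the two-stage game by backward dynamic programming, replacing each spherical constraint $\norm{w_k}^2=\alpha_k$ by its Lagrangian relaxation and reducing the resulting quadratic saddle point in closed form. I would start at the terminal stage $k=1$, where $V_2(x)=\tfrac{1}{2\overline{\alpha}}x'P_fx$. Adjoining the multiplier $\lambda_1$ to the constraint $\norm{w_1}^2=\alpha_1$, the Hessian of the relaxed stage objective $\tfrac{1}{\overline{\alpha}}\ell(x_1,u_1)+\tfrac{1}{2\overline{\alpha}}(Ax_1+Bu_1+Gw_1)'P_f(Ax_1+Bu_1+Gw_1)-\tfrac{\lambda_1}{2\overline{\alpha}}(\norm{w_1}^2-\alpha_1)$ in the variables $(u_1,w_1)$ is exactly $\tfrac{1}{\overline{\alpha}}M_1(\lambda_1)$. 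The constraint $\lambda_1\ge\norm{G'P_fG}$ forces $G'P_fG-\lambda_1 I\preceq0$, making the relaxed problem concave in $w_1$, while $R+B'P_fB\succ0$ keeps it convex in $u_1$; the unique stationary point then solves the linear system \eqref{2stage-u1}. Eliminating $(u_1,w_1)$ by the Schur complement of $M_1(\lambda_1)$ gives the reduced value $\tfrac{1}{2\overline{\alpha}}x_1'\Pi_1(\lambda_1)x_1+\tfrac{1}{2\overline{\alpha}}\lambda_1\alpha_1$.

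I would then repeat the reduction at stage $0$. Substituting the stage-$1$ value reproduces the same structure with $P_f$ replaced by $\Pi_1(\lambda_1)$ and an additive constant $\tfrac{1}{2\overline{\alpha}}\lambda_1\alpha_1$; introducing $\lambda_0$ for $\norm{w_0}^2=\alpha_0$ yields the Hessian $M_0(\lambda_0,\lambda_1)$, the stationarity system \eqref{2stage-u0}, and the reduced value $\tfrac{1}{2\overline{\alpha}}x_0'\Pi_0(\lambda_0,\lambda_1)x_0+\tfrac{1}{2\overline{\alpha}}(\alpha_0\lambda_0+\alpha_1\lambda_1)$. Collecting the two stagewise dual minimizations into the joint program over $(\lambda_0,\lambda_1)\in\Lambda_2$ produces the objective \eqref{eq:2stage-opt}, and strong duality at each stage (treated below) identifies its minimum with $V^*(x_0)$, giving item~3. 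Items~1 and~2 are read off by evaluating the stationarity conditions at the minimizer $(\lambda_0^*,\lambda_1^*)$: the full $M_k$ systems are the control conditions \eqref{2stage-u0}--\eqref{2stage-u1} of item~1, and their disturbance (second) block-rows are \eqref{2stage-w0}--\eqref{2stage-w1} of item~2, with $\overline{w}_k$ the stationary disturbance and $\overline{w}_k\cap\bbW_k$ recording that $\lambda_k^*$ is selected by the outer minimization (complementary slackness) so that $\norm{w_k}^2=\alpha_k$ is active.

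For item~4 I would prove positive semidefiniteness directly from the saddle representation rather than from the matrix formulas. Fixing $(\lambda_0,\lambda_1)\in\Lambda_2$, concavity in $w$ lets me lower-bound the inner maximum by its value at $w=0$; the min player then faces $\min_{u}\big[\tfrac12 x_1'Qx_1+\tfrac12 u'Ru+\tfrac12(Ax_1+Bu)'P_f(Ax_1+Bu)\big]\ge0$ since $Q,R,P_f\succeq0$, and because the minimum of a pointwise lower bound bounds the minimum from below, $\tfrac12 x_1'\Pi_1(\lambda_1)x_1\ge0$ for all $x_1$, hence $\Pi_1(\lambda_1)\succeq0$. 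The identical $w=0$ argument at stage $0$, now with $\Pi_1(\lambda_1)\succeq0$ playing the role of the terminal matrix, gives $\Pi_0(\lambda_0,\lambda_1)\succeq0$.

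The main obstacle is the strong-duality step: maximizing the indefinite quadratic over the sphere $\norm{w_k}^2=\alpha_k$ is nonconcave, so equating the constrained primal maximum with the relaxed concave maximum is not automatic and must rest on the single-quadratic-constraint (trust-region / $S$-procedure) duality result, which I would invoke from the appendix; the domain $\Lambda_2$ is precisely the set where the relaxed maxima are finite. A secondary technicality is the boundary of $\Lambda_2$, where $G'P_fG-\lambda_1 I$ (respectively $G'\Pi_1(\lambda_1)G-\lambda_0 I$) is singular, $M_k$ may fail to be invertible, and $\Pi_k$ requires a pseudoinverse/limiting interpretation; I would carry out the reduction on the interior and extend by continuity. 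Finally, the convexity of \eqref{eq:2stage-opt} that makes $(\lambda_0^*,\lambda_1^*)$ well defined follows from the recursion itself: at each stage the relaxed objective is a supremum over $w$ of quadratics jointly convex in state, control, and multipliers and affine in the multipliers, partial minimization over $u$ preserves convexity, and composition with the affine dynamics carries joint convexity in $(\cdot,\lambda_1)$ at stage~1 into joint convexity in $(\lambda_0,\lambda_1)$ at stage~0.
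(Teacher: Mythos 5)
Your skeleton (backward DP, per-stage Lagrangian relaxation, Schur-complement reduction producing $M_k$ and $\Pi_k$) matches the paper's, but there is a genuine gap at the step you describe as ``collecting the two stagewise dual minimizations into the joint program over $(\lambda_0,\lambda_1)\in\Lambda_2$.'' That collection does not follow from the per-stage trust-region/$S$-procedure duality you plan to invoke. After the stage-1 reduction, the stage-0 inner objective is $\ell(x_0,u_0)+\min_{\lambda_1\in\Lambda_1}\phi_1(\lambda_1,Ax_0+Bu_0+Gw_0)$, so the minimization over $\lambda_1$ sits \emph{inside} the maximization over $w_0$. As a function of $w_0$ this is an infimum of convex quadratics, hence in general neither quadratic nor concave, so the single-quadratic-constraint duality result cannot be applied directly to $\max_{w_0\in\bbW_0}$; it only governs each $(w_k,\lambda_k)$ pair. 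To reach \eqref{eq:2stage-opt} you must justify interchanging $\max_{w_0}$ with $\min_{\lambda_1}$, and this is precisely where the non-product structure of $\Lambda_2$ (the lower bound on $\lambda_0$ depends on $\lambda_1$) bites. The paper's proof makes this its central step: it first eliminates $\lambda_0$ by partial infimum, obtaining an extended-valued function $\psi$ that is concave in $w_0$ and convex in $\lambda_1$, compactifies both variables (boundary activation for $w_0$, coercivity $\psi\ge(\alpha_1/2)\lambda_1$ for $\lambda_1$), applies the minimax theorem (\cref{th:minimax}), and only then uses pointwise strong duality (\cref{prop:conquad}) for $(w_0,\lambda_0)$ at each fixed $\lambda_1$. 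Without this interchange, your identification of the nested game value with the joint convex program---item 3---is unproved.

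Two secondary points. First, you start from equality constraints $\norm{w_k}^2=\alpha_k$ and the constant denominator $\overline{\alpha}$, but \eqref{2stagedp-w} has inequality constraints and a ratio objective; the boundary-activation reduction is itself part of the proof (at stage 1 via $G'P_fG\succeq0$ together with \cref{asst3}, at stage 0 via convexity of the $w_0$-quadratic when $\Pi_1(\lambda_1)\succeq0$ and the extreme-point property of convex maximization over a compact convex set), and it also supplies the compactness needed in the minimax step, as well as existence of $(\lambda_0^*,\lambda_1^*)$ by Weierstrass, which you omit. Second, your planned pseudoinverse/limiting treatment of the boundary of $\Lambda_2$ is unnecessary: by \cref{prop:range-inclusion-invertible} with \cref{asst2,asst3} (note $\lambda_1\ge\norm{G'P_fG}>0$), $M_1(\lambda_1)$ and $M_0(\lambda_0,\lambda_1)$ are invertible on all of $\Lambda_2$, boundary included, so the reduction holds there directly. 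On the positive side, your $w=0$ evaluation argument for item 4 is correct and is a genuinely different, arguably cleaner route than the paper's, which establishes $\Pi_k\succeq0$ through the algebraic rewriting of \cref{receq} with $\bar{A}=A+BK$ and $\bar{Q}=Q+K'RK$; your value-function bound avoids that computation entirely once the closed-form reduction is in place.
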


\noindent\textbf{Proof sketch.}\ The proof has four blocks: replacement of the inequality constraints $|w_k|^2 \leq \alpha_k$ by equalities under \cref{asst3}, deferment of the stagewise Lagrange multipliers $\lambda_0, \lambda_1$ to the outer optimization via the stacked Lagrangian and strong duality, construction of the feasibility domain $\Lambda_2$ from the admissibility conditions $\lambda_1 \geq \norm{G'P_fG}$ and $\lambda_0 \geq \norm{G'\Pi_1(\lambda_1)G}$, and joint convexity of the value function in $(\lambda_0, \lambda_1, x_0)$. The full proof is in the appendix (\cref{app:proofs}).

\subsection{Finite Horizon Solution}
We now generalize to derive the recursive optimal solution to the finite horizon StDAR \eqref{stagedp-w}
\begin{equation*}
    V^*(x_0) \eqbyd \min_{u_0}\max_{w_0} \; \min_{u_1}\max_{w_1} \; \cdots \min_{u_{N-1}}\max_{w_{N-1}} \;
 \frac{V(x_0, \useq, \wseq)}{\overline{\alpha}} \; \; \wseq \in \bbW_{st}
\end{equation*}
where $\useq \eqbyd (u_0,u_1,\dots,u_{N-1})$, $\wseq \eqbyd (w_0,w_1,\dots,w_{N-1})$, $\overline{\alpha} = \sum_{k=0}^{N-1}\alpha_k$, and the objective function is \eqref{maincost}
\begin{equation*}
    V(x_0, \useq, \wseq) = \sum_{k=0}^{N-1} \ell(x_k, u_k)  + \ell_f(x_N)
\end{equation*}
The solution employs backward dynamic programming to evaluate the Bellman recursion \eqref{bellman-recursion} at each stage by introducing Lagrange multipliers for the constrained maximizations.

\Cref{prop:ndstage} extends \cref{prop:2stage-StDAR} to horizon $N$: the optimal multiplier vector $\boldsymbol{\lambda}_0^*(x_0)$ minimizes a convex value function over a feasibility domain $\Lambda_N$ that the proposition also constructs, and the optimal control and disturbance policies follow from the stagewise stationary conditions evaluated at $\boldsymbol{\lambda}_0^*$.

\begin{proposition}[Finite horizon StDAR \eqref{stagedp-w}]
\label{prop:ndstage}
Let Assumptions~1--3 hold; the matrices $M_k(\boldsymbol{\lambda}_k)$ and $\Pi_k(\boldsymbol{\lambda}_k)$ used below are defined later in this proposition by the recursion \eqref{stagerec1} with terminal condition $\Pi_N = P_f$. Define
\begin{align*}
\Lambda_N \eqbyd \Big\{\boldsymbol{\lambda}_0\in\bbR^N: &\ \lambda_{N-1}\ge\norm{G'P_fG},\\
&\ \lambda_k\ge\norm{G'\Pi_{k+1}(\boldsymbol{\lambda}_{k+1})G},\\
&\quad k=0,\ldots,N-2\Big\}
\end{align*}
\[
 \sum_{k=0}^{N-1} \alpha_k=\overline{\alpha}
\]
where $\boldsymbol{\lambda}_k = (\lambda_k,\ldots,\lambda_{N-1})$ denotes the vector of multipliers from stage $k$ onward. Consider the convex optimization
\begin{equation}
\mathbf{L}_{st}: \quad \min_{\boldsymbol{\lambda}_0 \in \Lambda_N}
\frac{1}{2\overline{\alpha}}\,x_0'\,\Pi_0(\boldsymbol{\lambda}_0)\,x_0
+\frac{1}{2\overline{\alpha}}\sum_{k=0}^{N-1}\alpha_k\lambda_k \label{lst}
\end{equation}
where $\Pi_k(\boldsymbol{\lambda}_k)$ is given by the recursion
\begin{equation}
\begin{split}
\Pi_k(\boldsymbol{\lambda}_k) &= Q+A'\Pi_{k+1}A-A' \Pi_{k+1} \begin{bmatrix} B & G \end{bmatrix} M_k(\boldsymbol{\lambda}_k)^{-1}
\begin{bmatrix} B' \\ G'\end{bmatrix}\Pi_{k+1}A
\end{split}
\label{stagerec1}
\end{equation}
for $k \in \{0,1,\ldots,N-1\}$ with
\begin{equation}
M_k(\boldsymbol{\lambda}_k) \eqbyd \begin{bmatrix}
B'\Pi_{k+1}B + R & B'\Pi_{k+1} G \\
(B'\Pi_{k+1}G)' & G'\Pi_{k+1}G - \lambda_k I
\end{bmatrix} \label{Mk-def}
\end{equation}
and terminal condition $\Pi_N = P_f \succeq0$.

Given the solution to convex optimization \eqref{lst}, $\boldsymbol{\lambda}_0^* = (\lambda_0^*,\ldots,\lambda_{N-1}^*)$, then
\begin{enumerate}
\item The optimal control policy $u^*_k(x_k)$ from \eqref{opt-control-policy} for $k \in \{0,\ldots,N-1\}$ to \eqref{stagedp-w} satisfies the stationary conditions
\begin{equation}
M_k(\boldsymbol{\lambda}_k^*)
\begin{bmatrix} u_k \\ z_k \end{bmatrix}^* = -\begin{bmatrix} B' \\ G'\end{bmatrix} \Pi_{k+1}(\boldsymbol{\lambda}_{k+1}^*)A \; x_k \label{Nstage-uk}
\end{equation}
where $z_k$ denotes the Lagrangian stationary disturbance from the unconstrained stationary conditions, computed by the second block of \eqref{Nstage-uk}; the constrained optimal disturbance $w_k^*(x_k)$ is given separately in item~2. By \cref{prop:range-inclusion-invertible} and \cref{asst2,asst3}, $M_k(\boldsymbol{\lambda}_k^*)$ is invertible, so \eqref{Nstage-uk} determines $(u_k^*, z_k^*)$ uniquely.
\item The optimal disturbance policy $w^*_k(x_k) = \overline{w}_k \cap \bbW_k$ from \eqref{opt-dist-policy} for $k \in \{0,\ldots,N-1\}$ to \eqref{stagedp-w} satisfies
\begin{equation}
\begin{split}
(B'\Pi_{k+1}(\boldsymbol{\lambda}_{k+1}^*)G)'u^*_k(x_k) &+(G'\Pi_{k+1}(\boldsymbol{\lambda}_{k+1}^*)G - \lambda_k^* I) \ \overline{w}_k = - G'\Pi_{k+1}(\boldsymbol{\lambda}_{k+1}^*)Ax_k
\end{split}
\label{Nstage-wk}
\end{equation}
where $\overline{w}_k \subset \bbR^q$ is the set of solutions to \eqref{Nstage-wk}, and $\bbW_k = \{w_k : |w_k|^2 = \alpha_k\}$.
\item The optimal cost to \eqref{stagedp-w} is
\begin{equation}
V^*(x_0) = \frac{1}{2\overline{\alpha}}\,x_0'\,\Pi_0(\boldsymbol{\lambda}_0^*)\,x_0
+\frac{1}{2\overline{\alpha}}\sum_{k=0}^{N-1}\alpha_k\lambda_k^* \label{ndocstage}
\end{equation}
\item For all $\boldsymbol{\lambda}_0\in\Lambda_N$, we have that $\Pi_k(\boldsymbol{\lambda}_k)\succeq0$ for $k \in \{0,\ldots,N\}$.
\end{enumerate}
\end{proposition}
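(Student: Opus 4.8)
The plan is to prove \cref{prop:ndstage} by backward induction on the stage index $k$, treating \cref{prop:2stage-StDAR} as the fully worked $N=2$ template and reusing its machinery at every stage. The induction hypothesis at stage $k+1$ is that the partially optimized value function
\[
\phi_{k+1}(\boldsymbol{\lambda}_{k+1},x_{k+1}) = \tfrac12 x_{k+1}'\Pi_{k+1}(\boldsymbol{\lambda}_{k+1})x_{k+1} + \tfrac12\sum_{j=k+1}^{N-1}\alpha_j\lambda_j
\]
satisfies three properties: $\Pi_{k+1}(\boldsymbol{\lambda}_{k+1})\succeq0$ on the feasible tail domain; $\phi_{k+1}$ is jointly convex in $(\boldsymbol{\lambda}_{k+1},x_{k+1})$; and $\phi_{k+1}$ is coercive in $\boldsymbol{\lambda}_{k+1}$. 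The base case $k+1=N-1$ is exactly the first step of the two-stage proof: boundary activation of $w_{N-1}$ follows from \cref{asst3} (since $G'P_fG\succeq0$ and nonzero rules out an interior maximum), the multiplier $\lambda_{N-1}$ is introduced through \cref{prop:conquad}, and \cref{prop:sdparalt} with \cref{receq} delivers the closed form for $\Pi_{N-1}$ together with $\Pi_{N-1}\succeq0$; convexity and coercivity follow from the pointwise-supremum and partial-minimization arguments already recorded there.

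For the inductive step I would carry out the second step of the two-stage proof verbatim, with $(w_0,\lambda_0,\lambda_1)$ replaced by $(w_k,\lambda_k,\boldsymbol{\lambda}_{k+1})$. First, boundary activation: because $\Pi_{k+1}(\boldsymbol{\lambda}_{k+1})\succeq0$ by hypothesis, the quadratic $\tfrac12 w_k'G'\Pi_{k+1}Gw_k$ is convex in $w_k$, so its maximum over $\{|w_k|^2\le\alpha_k\}$ is attained on the sphere by \citet[Corollary~32.3.2]{rockafellar:1970}, fixing the denominator to the constant $\overline{\alpha}$. Next, I form the Lagrangian $L_k = \ell(x_k,u_k)+\phi_{k+1}(\boldsymbol{\lambda}_{k+1},Ax_k+Bu_k+Gw_k)-\tfrac{\lambda_k}{2}(|w_k|^2-\alpha_k)$ and interchange minimization and maximization in two sub-steps. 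In the first, I eliminate $\lambda_k$ to obtain an extended-valued reduced function that is concave in $w_k$ and, using the inductive joint convexity of $\phi_{k+1}$ precomposed with the affine map $x_{k+1}(w_k)$, convex in $\boldsymbol{\lambda}_{k+1}$; coercivity confines the tail minimizers to a compact box, so \cref{th:minimax} swaps $\max_{w_k}$ with $\min_{\boldsymbol{\lambda}_{k+1}}$. In the second, for each fixed $\boldsymbol{\lambda}_{k+1}$ the sphere-constrained strong duality of \cref{prop:conquad} swaps $\max_{w_k}$ with $\min_{\lambda_k}$. Invertibility of $M_k(\boldsymbol{\lambda}_k)$ on $\Lambda_N$ follows from \cref{prop:range-inclusion-invertible} with Assumptions~2--3, \cref{prop:sdparalt} yields $\min_{u_k}\max_{w_k}L_k = \tfrac12 x_k'\Pi_k(\boldsymbol{\lambda}_k)x_k + \tfrac12\sum_{j=k}^{N-1}\alpha_j\lambda_j$, and \cref{receq} gives $\Pi_k\succeq0$; the convexity and coercivity of $\phi_k$ follow from the same pointwise-supremum and partial-minimization reasoning, which closes the induction.

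Reaching $k=0$, the outer minimization over $\boldsymbol{\lambda}_0\in\Lambda_N$ produces the optimization $\mathbf{L}_{st}$ of \eqref{lst}. Existence of $\boldsymbol{\lambda}_0^*$ follows from Weierstrass, since $\phi_0$ is coercive on the closed set $\Lambda_N$ and $\Pi_0$ is continuous (every $M_k$ being invertible). The optimal cost \eqref{ndocstage} then follows by evaluating $L_0$ at $\boldsymbol{\lambda}_0^*$ with all stagewise constraints active; items~1--2, the stationarity conditions \eqref{Nstage-uk} and the disturbance relations \eqref{Nstage-wk}, fall out of the stagewise closed forms supplied by \cref{prop:sdparalt}; and item~4 is the accumulated $\Pi_k\succeq0$ from \cref{receq}. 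The main obstacle is the minimax interchange against the non-product feasible set $\Lambda_N$: the lower bound $\lambda_k\ge\norm{G'\Pi_{k+1}(\boldsymbol{\lambda}_{k+1})G}$ couples each multiplier to its entire tail, so $\max_{w_k}$ cannot be swapped with the full block $\min_{\boldsymbol{\lambda}_k}$ in one stroke. The two-sub-step decomposition is what rescues this, and its viability rests entirely on propagating joint convexity in $(\boldsymbol{\lambda}_{k+1},x_{k+1})$ through the induction: this is precisely the property that makes the tail-reduced function convex in $\boldsymbol{\lambda}_{k+1}$ and thereby licenses \cref{th:minimax}. Maintaining this convexity, together with the accompanying coercivity that compactifies the multiplier domain, at every stage is the delicate part of the argument.
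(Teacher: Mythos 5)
Your proposal is correct and follows essentially the same route as the paper's proof: backward induction anchored on the two-stage result, with the induction hypothesis carrying $\Pi_{k+1}\succeq0$, joint convexity, and coercivity of $\phi_{k+1}$; boundary activation via \cref{asst3} at the terminal stage and \citet[Corollary~32.3.2]{rockafellar:1970} thereafter; and the same two-sub-step interchange (eliminate $\lambda_k$ to form an extended-valued reduced function, apply \cref{th:minimax} against $\boldsymbol{\lambda}_{k+1}$, then pointwise strong duality via \cref{prop:conquad}), concluding with \cref{prop:range-inclusion-invertible}, \cref{prop:sdparalt}, \cref{receq}, and Weierstrass. You also correctly identify the non-product structure of $\Lambda_N$ as the crux, which is exactly the difficulty the paper's reduced function $\psi_k$ is designed to handle.
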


\noindent\textbf{Proof sketch.}\ The proof extends \cref{prop:2stage-StDAR} to horizon $N$ by backward induction. Each stage applies the same four-block argument (constraint replacement, deferment of $\lambda_k$ via stacked Lagrangian and strong duality, convexity of the recursive feasibility domain $\Lambda_{N-k}$ from joint convexity of $\phi_{k+1}$, and joint convexity of the value function), with the induction hypothesis carrying $\Lambda_{N-k-1}$ convex, $\phi_{k+1}$ jointly convex, and $\Pi_{k+1} \succeq 0$. The full proof is in the appendix (\cref{app:proofs}).

\subsection{Optimal Policy and Implementation}
\label{sec:policy-structure}

The optimal state feedback policy for the StDAR \eqref{stagedp-w} is nonlinear in the state. To understand this nonlinearity, we first recall the Bellman recursion from \eqref{bellman-recursion}--\eqref{opt-control-policy}
\begin{equation}
V_k(x) = \min_u \max_{|w|^2 = \alpha_k} \left[ \frac{1}{\overline{\alpha}}\ell(x,u) + V_{k+1}(Ax+Bu+Gw) \right] \tag{\ref{bellman-recursion}}
\end{equation}
\begin{equation}
u_k^*(x) = \arg\min_u \max_{|w|^2 = \alpha_k} \left[ \frac{1}{\overline{\alpha}}\ell(x,u) + V_{k+1}(Ax+Bu+Gw) \right] \tag{\ref{opt-control-policy}}
\end{equation}
The dynamic programming solution in \cref{prop:ndstage} evaluates these recursions by introducing Lagrange multipliers $\boldsymbol{\lambda}_k = (\lambda_k,\ldots,\lambda_{N-1})$ for the constrained maximizations, transforming the problem into a backward recursion for the matrices $\Pi_k(\boldsymbol{\lambda}_k)$ via \eqref{stagerec1} and a forward optimization for the multipliers at each stage.

At stage $k$ with current state $x_k$, the optimal multipliers for the remaining $N-k$ stages are determined by
\begin{equation}
\boldsymbol{\lambda}_k^*(x_k; \boldsymbol{\alpha}_{k}) = \arg\min_{\boldsymbol{\lambda}_k \in \Lambda_{N-k}} \frac{1}{2\overline{\alpha}} x_k' \Pi_k(\boldsymbol{\lambda}_k) x_k + \frac{1}{2\overline{\alpha}} \sum_{j=k}^{N-1} \alpha_j \lambda_j
\label{eq:stage-k-opt}
\end{equation}
where $\overline{\alpha} = \sum_{j=0}^{N-1} \alpha_j$, $\boldsymbol{\alpha}_{k} = (\alpha_k,\ldots,\alpha_{N-1})$ denotes the remaining stage bound, and $\Pi_k(\boldsymbol{\lambda}_k)$ is computed via the backward recursion \eqref{stagerec1}. Given $\boldsymbol{\lambda}_k^*(x_k; \boldsymbol{\alpha}_{k})$, the optimal control from \eqref{Nstage-uk} is
\begin{equation}
u_k^*(x_k; \boldsymbol{\alpha}_{k}) = K_k(\boldsymbol{\lambda}_k^*(x_k; \boldsymbol{\alpha}_{k})) x_k
\label{eq:nonlinear-policy}
\end{equation}
where the gain matrix is defined by
\begin{equation}
K_k(\boldsymbol{\lambda}_k) \eqbyd -\begin{bmatrix} I & 0 \end{bmatrix} M_k(\boldsymbol{\lambda}_k)^{-1} \begin{bmatrix} B' \\ G' \end{bmatrix} \Pi_{k+1}(\boldsymbol{\lambda}_{k+1}) A
\label{eq:gain-def}
\end{equation}
The policy \eqref{eq:nonlinear-policy} is nonlinear in $x_k$ because the optimal multipliers $\boldsymbol{\lambda}_k^*(x_k; \boldsymbol{\alpha}_{k})$ depend on the state through the quadratic term in \eqref{eq:stage-k-opt}, making the composition $x_k \mapsto \boldsymbol{\lambda}_k^*(x_k; \boldsymbol{\alpha}_{k}) \mapsto K_k(\boldsymbol{\lambda}_k^*(x_k; \boldsymbol{\alpha}_{k}))$ state dependent and nonlinear.

\begin{remark}[Parametric dependence on stage bound]
The stage bound $$\boldsymbol{\alpha}_0 = (\alpha_0,\ldots,\alpha_{N-1})$$ are fixed problem data, not time-varying states. The semicolon notation in \eqref{eq:stage-k-opt}--\eqref{eq:nonlinear-policy} emphasizes that $\boldsymbol{\alpha}_{k}$ enters as a parameter: at runtime, only the current state $x_k$ is measured and used to compute the control. The mapping $u_k^*(\cdot; \boldsymbol{\alpha}_{k})$ is generally nonlinear in $\boldsymbol{\alpha}$ across problem instances: changing the stage bound yields a different policy through the optimal multipliers, gains, and propagation matrices. Within a fixed instance with given $\boldsymbol{\alpha}$, we suppress the parametric dependence and write $u_k^*(x_k)$ for brevity.
\end{remark}

\begin{remark}[Comparison with LQR]
The StDAR policy differs fundamentally from the LQR policy. The backward sweep computes $\Pi_k(\boldsymbol{\lambda}_k)$ for all $\boldsymbol{\lambda}_k \in \Lambda_{N-k}$ via \eqref{stagerec1}, but unlike LQR, these do not directly yield fixed gain matrices $K_k$ applicable as $u^*_k(x_k) = K_k x_k$. Instead, an online forward optimization solving \eqref{eq:stage-k-opt} at each stage $k$ determines state dependent gains from the current state $x_k$. The nonlinearity arises from this online optimization.
\end{remark}

\begin{remark}[Implementation and time consistency]
Implementation of the optimal policy \eqref{eq:nonlinear-policy} demands resolving \eqref{eq:stage-k-opt} at every stage $k$ using the current state $x_k$. Deviations of the realized state from the nominal trajectory, arising from disturbances, model mismatch, or other sources, invalidate the multipliers $\boldsymbol{\lambda}_0^*(x_0; \boldsymbol{\alpha}_{0})$ computed initially at $k=0$, necessitating repeated optimization. This shrinking horizon distinguishes the StDAR from problems admitting precomputed offline policies.
\end{remark}

\begin{remark}[Computational implementation]
Two implementation strategies exist for the optimal policy \eqref{eq:nonlinear-policy}:
\begin{enumerate}
\item \textbf{Online optimization:} Given the current state $x_k$ at stage $k$, solve the convex optimization \eqref{eq:stage-k-opt} to obtain $\boldsymbol{\lambda}_k^*(x_k; \boldsymbol{\alpha}_{k})$, compute the gain $K_k(\boldsymbol{\lambda}_k^*(x_k; \boldsymbol{\alpha}_{k}))$ from \eqref{eq:gain-def}, and apply $u_k = K_k(\boldsymbol{\lambda}_k^*(x_k; \boldsymbol{\alpha}_{k})) x_k$. This convex program has dimension $N-k$, which decreases as the horizon shrinks.
\item \textbf{Offline precomputation:} Discretize the state space, precompute and store the mapping $x_k \mapsto \boldsymbol{\lambda}_k^*(x_k; \boldsymbol{\alpha}_{k})$, then implement $u_k = K_k(\boldsymbol{\lambda}_k^*(x_k; \boldsymbol{\alpha}_{k})) x_k$ through table lookup at runtime. The curse of dimensionality restricts this approach to low-dimensional systems.
\end{enumerate}
Algorithm~\ref{alg:online} summarizes the online approach.
\end{remark}

\begin{algorithm}[h]
\caption{Online implementation of nonlinear optimal policy}
\label{alg:online}
\begin{algorithmic}[1]
\STATE \textbf{Input:} Horizon $N$, system matrices $(A,B,G)$, weights $(Q,R,P_f)$, bound $\boldsymbol{\alpha}_k = (\alpha_k,\ldots,\alpha_{N-1})$
\FOR{$k = 0, 1, \ldots, N-1$}
    \STATE Observe current state $x_k$
    \STATE Solve optimization \eqref{eq:stage-k-opt} to obtain $\boldsymbol{\lambda}_k^*(x_k; \boldsymbol{\alpha}_{k}) = (\lambda_k^*, \ldots, \lambda_{N-1}^*)$
    \STATE Compute gain $K_k(\boldsymbol{\lambda}_k^*(x_k; \boldsymbol{\alpha}_{k}))$ from \eqref{eq:gain-def}
    \STATE Apply control $u_k = K_k(\boldsymbol{\lambda}_k^*(x_k; \boldsymbol{\alpha}_{k})) x_k$
    \STATE System evolves: $x_{k+1} = Ax_k + Bu_k + Gw_k$
\ENDFOR
\end{algorithmic}
\end{algorithm}

\section{Steady-state StDAR}
\label{sec:ss-stage}

This section introduces the steady-state StDAR with constant disturbance bound and derives its LMI representation via \cref{prop:lmigen-stage}.

\subsection{Steady-state problem}

Let Assumptions 1-4 hold and consider constant stage bound $\alpha_k = \alpha$ for all $k$. Given the linear system \eqref{system}, we define the following optimization, denoted as steady‑state StDAR
\begin{subequations}\label{eq:ss-problem-stage}
\begin{align}
  \min_{\lambda,\Pi}\;&\; V(\lambda,\Pi)
  \label{eq:ss-problem-stage:cost}\\
  \text{s.\,t. }& \;\lambda\geq\norm{G'\Pi G}\quad
                  g(\lambda,\Pi)=0
  \label{eq:ss-problem-stage:constraints}
\end{align}
\end{subequations}
where
\[
V(\lambda,\Pi) \eqbyd \lambda /2 
\]
\begin{align*}
g(\lambda,\Pi) &\eqbyd Q+A'\Pi A -A' \Pi \begin{bmatrix} B & G \end{bmatrix} \begin{bmatrix} B'\Pi B + R & B'\Pi G \\ (B'\Pi G)' & G'\Pi G - \lambda I  \end{bmatrix}^{-1}
    \begin{bmatrix} B' \\ G'\end{bmatrix}\Pi A - \Pi
\end{align*}
\[
  M(\lambda,\Pi)\eqbyd
  \begin{bmatrix}
      B' \Pi B + R & B' \Pi G\\
      G' \Pi B     & G' \Pi G - \lambda I
  \end{bmatrix}
\]

\begin{remark}[Infinite horizon average cost]
For constant stage bound $\alpha_k = \alpha$, the average cost per stage of the finite horizon problem in \cref{prop:ndstage} is
\[
\frac{1}{2N\alpha}x_0'\Pi_0(\lambda_0,\ldots,\lambda_{N-1})x_0 + \frac{1}{2N}\sum_{k=0}^{N-1}\lambda_k
\]
The steady-state problem \eqref{eq:ss-problem-stage} is the formal $N\to\infty$ limit of this average under three hypotheses: (i) $\Pi_0$ is bounded above in $N$, so the boundary term vanishes and the limit is independent of $x_0$; (ii) the time average $(1/N)\sum_{k=0}^{N-1}\lambda_k$ converges to $\overline{\lambda}$, consistent with the turnpike behavior in \cref{rem:turnpike-stage}; (iii) the Riccati recursion is continuous in $(\lambda,\Pi)$, so $\overline{\lambda}$ and the corresponding Riccati fixed point $\overline{\Pi}$ satisfy $g(\overline{\lambda},\overline{\Pi})=0$. A rigorous proof of (i)--(iii) for the StDAR remains open; for related convergence results for the SiDAR see \citep{mannini:rawlings:2026b}.
\end{remark}

Let $(\overline{\lambda},\overline{\Pi})$ denote the solution to \eqref{eq:ss-problem-stage}. 
\cref{prop:lmigen-stage} establishes an LMI for the steady-state problem \eqref{eq:ss-problem-stage}. Existence of a solution to the LMI is addressed in \cref{prop:lmi-existence-stage}.

\begin{proposition}[LMI for steady-state StDAR]
    \label{prop:lmigen-stage}
The solution to steady‑state StDAR \eqref{eq:ss-problem-stage} is implied by the following optimization
\begin{equation}
        \min_{\lambda,P,F} \; \lambda /2 \label{inflmi-stage}
    \end{equation}
    subject to
    \[
    \begin{bmatrix}
       P & (AP - BF)' & 0 & P\hat{Q}' - F'\hat{R}'\\
       AP - BF & P & G & 0 \\
       0 & G' & \lambda I & 0\\
       (P\hat{Q}' - F'\hat{R}')'  & 0 & 0 & I
    \end{bmatrix} \succeq 0
    \]    
    where $K = -F P^{-1}$, $P = \Pi^{-1}$, $\hat{Q}' = \begin{bmatrix} Q^{1/2} & 0 \end{bmatrix}$, and $\hat{R}' = \begin{bmatrix} 0 & R^{1/2} \end{bmatrix}$.
\end{proposition}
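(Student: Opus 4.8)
The plan is to recognize the algebraic constraint $g(\lambda,\Pi)=0$ as an algebraic Riccati \emph{equation}, relax it to the corresponding Riccati \emph{inequality} with a free feedback gain, and then convert that inequality into the stated LMI through a chain of Schur complements and the congruence change of variables $P=\Pi^{-1}$, $F=-KP$. Here Assumption~4 is essential: it forces $\Pi\succ0$, so that $P=\Pi^{-1}$ is well defined. The first observation is that the Riccati operator
\[
\mathcal R(\Pi,\lambda)\eqbyd Q+A'\Pi A-A'\Pi\begin{bmatrix}B&G\end{bmatrix}M(\lambda,\Pi)^{-1}\begin{bmatrix}B'\\ G'\end{bmatrix}\Pi A
\]
is exactly the value of the stagewise $\min_u\max_w$ Lagrangian, so that for $u=Kx$ with $A_{cl}\eqbyd A+BK$ one has $\mathcal R(\Pi,\lambda)=\min_K V_K(\Pi,\lambda)$, where $V_K(\Pi,\lambda)=Q+K'RK+A_{cl}'\bigl[\Pi+\Pi G(\lambda I-G'\Pi G)^{-1}G'\Pi\bigr]A_{cl}$. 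Consequently $g(\lambda,\Pi)\preceq0$ holds if and only if there exists $K$ with $V_K(\Pi,\lambda)\preceq\Pi$, and the free variable $F$ in the LMI plays the role of this gain.

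Next I would carry out the algebra. The Woodbury identity gives $\Pi+\Pi G(\lambda I-G'\Pi G)^{-1}G'\Pi=(\Pi^{-1}-\lambda^{-1}GG')^{-1}$, so $V_K(\Pi,\lambda)\preceq\Pi$ is equivalent to $\Pi-Q-K'RK-A_{cl}'(\Pi^{-1}-\lambda^{-1}GG')^{-1}A_{cl}\succeq0$. A first Schur complement on the $A_{cl}$ term produces a $2\times2$ block inequality whose corners are $\Pi-Q-K'RK$ and $W\eqbyd\Pi^{-1}-\lambda^{-1}GG'$; the domain condition $\lambda I\succ G'\Pi G$, equivalent to $\lambda\ge\norm{G'\Pi G}$, is precisely what makes $W\succ0$. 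Congruence by $\operatorname{diag}(P,I)$ with $P=\Pi^{-1}$, together with $F=-KP$, linearizes the off-diagonal blocks to $AP-BF$ and turns the corners into $P-PQP-F'RF$ and $P-\lambda^{-1}GG'$. Since $P\hat Q'-F'\hat R'=\begin{bmatrix}PQ^{1/2}&-F'R^{1/2}\end{bmatrix}$, one has $(P\hat Q'-F'\hat R')(P\hat Q'-F'\hat R')'=PQP+F'RF$, so a Schur complement against an identity block recovers the first corner while introducing the coupling $P\hat Q'-F'\hat R'$; a final Schur complement of the $\lambda^{-1}GG'$ term against a $\lambda I$ block introduces the $G$ coupling. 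Reassembling the four blocks reproduces exactly the matrix in \eqref{inflmi-stage}, with recovered data $\Pi=P^{-1}$ and $K=-FP^{-1}$.

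The remaining step, and the one I expect to be the main obstacle, is to show that minimizing $\lambda$ over the inequality $g(\lambda,\Pi)\preceq0$ (equivalently, over the LMI) returns a genuine solution of the equality-constrained problem \eqref{eq:ss-problem-stage}. One inclusion is immediate: any $(\lambda,\Pi)$ with $g(\lambda,\Pi)=0$ is LMI-feasible (take $K$ to be the game-optimal gain, so that $V_K=\Pi$), whence the LMI optimum satisfies $\lambda^*_{\mathrm{LMI}}\le\overline\lambda$. The reverse inclusion rests on a monotonicity argument. Because $\mathcal R(\Pi,\lambda)$ is monotone decreasing in $\lambda$ on the region $\lambda I\succ G'\Pi G$, a strict inequality $g(\lambda,\Pi)\prec0$ would let $\lambda$ be decreased while preserving feasibility; hence at the LMI optimum the inequality is active. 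To upgrade activeness to the full matrix equality $g(\overline\lambda,\overline\Pi)=0$, I would run the monotone Riccati iteration $\Pi^{(j+1)}=\mathcal R(\Pi^{(j)},\overline\lambda)$ from $\Pi^{(0)}=\Pi$: the inequality $\mathcal R(\Pi,\overline\lambda)\preceq\Pi$ makes the sequence nonincreasing and bounded below by $0$ (the iterates stay positive semidefinite by the argument of \cref{prop:ndstage}, and $Q\succ0$, $P_f\succ0$ from Assumption~4 yield a positive definite limit), while the constraint $\lambda I\succ G'\Pi^{(j)}G$ only relaxes as $\Pi^{(j)}$ decreases, and Assumption~1 guarantees convergence to a fixed point $\overline\Pi\preceq\Pi$ with $g(\overline\lambda,\overline\Pi)=0$. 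This supplies an equality-feasible point at the same $\lambda$, so $\overline\lambda\le\lambda^*_{\mathrm{LMI}}$ as well; the two optima coincide and $(\overline\lambda,\overline\Pi=P^{-1})$ solves \eqref{eq:ss-problem-stage} with gain $K=-FP^{-1}$. The delicate points are the monotonicity of the indefinite game Riccati operator $\mathcal R(\cdot,\lambda)$ (which holds only while $\lambda I\succ G'\Pi G$ is maintained) and the convergence of its monotone iteration to a stabilizing fixed point, both of which I would invoke from the standard $H_\infty$ Riccati theory under Assumptions~1 and~4.
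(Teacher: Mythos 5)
The paper does not actually prove \cref{prop:lmigen-stage} in-line: its ``proof'' is a citation to Proposition~18 of the companion paper \citep{mannini:rawlings:2026b}, noting only that the auxiliary $2\times2$ LMI needed there for the signal bound case is dropped. Your reconstruction---relax $g(\lambda,\Pi)=0$ to the Riccati inequality with a free gain $K$ (justified by the completion-of-square identity $V_K=V_{K^*}+(K-K^*)'(R+B'\Phi B)(K-K^*)$, which makes $\mathcal R=\min_K V_K$ a genuine Loewner-order minimum), apply Woodbury, perform the congruence by $\operatorname{diag}(P,I)$ with $P=\Pi^{-1}$, $F=-KP$, and Schur-complement twice to reach the $4\times4$ LMI, then recover equality feasibility at the optimum via the monotone iteration $\Pi^{(j+1)}=\mathcal R(\Pi^{(j)},\overline\lambda)$---is the standard route by which LMIs of this form are derived, and by all indications is the same approach as the cited proof. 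Your write-up is in fact more complete than the paper on the one point it leaves entirely implicit, namely why minimizing $\lambda$ over the inequality relaxation still produces a solution of the equality-constrained problem \eqref{eq:ss-problem-stage}: the monotone, bounded-below iteration (with $\mathcal R(\cdot,\lambda)$ matrix-monotone on its domain, the domain condition only relaxing as the iterates decrease, and $Q\succ0$ forcing $\overline\Pi\succeq Q\succ0$) is the correct mechanism.

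Three caveats. First, your ``activeness'' argument is both unnecessary and not quite sound: decreasing $\lambda$ at fixed $\Pi$ can be blocked by the constraint $\lambda\ge\norm{G'\Pi G}$ rather than by the Riccati inequality, so strict $g(\lambda,\Pi)\prec0$ does not by itself contradict optimality; your subsequent iteration argument supplies the equality-feasible point at the same $\lambda$ and makes that paragraph superfluous---drop it. Second, $\lambda\ge\norm{G'\Pi G}$ is equivalent to $\lambda I\succeq G'\Pi G$, not to the strict inequality your Woodbury step and $W\succ0$ require; the boundary case needs pseudo-inverses and nonstrict (generalized) Schur complements, consistent with the paper's use of $M^\dagger$ in \cref{receq}, and deserves one explicit sentence. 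Third, a dimensional point: your chain actually places $G$ (which is $n\times q$) in the $(2,3)$ block against the $q\times q$ block $\lambda I$, with $G'$ in position $(3,2)$; the matrix displayed in the proposition has these two blocks transposed, which is dimensionally inconsistent for $q\neq n$, so your claim to ``reproduce exactly the matrix'' holds only up to what is evidently a transposition typo in the statement itself.
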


\begin{proof}
The proof follows the approach of Proposition 18 in \citep{mannini:rawlings:2026b}. The stage bound problem in this paper requires only the single $4\times 4$ LMI constraint above, whereas the signal bound problem in Proposition 18 requires both the $4\times 4$ LMI and an additional $2\times 2$ LMI constraint. With this modification, the remainder of the proof carries through unchanged.
\end{proof}

\begin{proposition}[Existence of LMI]
\label{prop:lmi-existence-stage}
Let Assumptions~\ref{asst1}--\ref{asst4} hold. The optimization problem \eqref{inflmi-stage} has an optimal solution $(\lambda^*, P^*, F^*)$.
\end{proposition}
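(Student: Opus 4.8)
The plan is to prove existence by the direct method: first show the feasible set of \eqref{inflmi-stage} is nonempty, then show that a sublevel set of the continuous linear objective $\lambda/2$ is compact, and finally invoke the Weierstrass theorem. The feasible set is automatically closed, since the LMI constraint is the preimage of the closed positive semidefinite cone under the affine map $(\lambda,P,F)\mapsto$ (the block matrix), so all the work is in nonemptiness and boundedness of a sublevel set.

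First I would establish feasibility by taking $\lambda$ large. Taking the Schur complement of the $(3,3)$ block $\lambda I$ (valid for $\lambda>0$) shows that the coupling through $G$ contributes only an $O(1/\lambda)$ correction to the $(2,2)$ block, so the full LMI is implied by the reduced constraint obtained by deleting the third block row and column, namely the standard stabilization LMI in the blocks $P$, $AP-BF$, and $P\hat Q'-F'\hat R'$. Under \cref{asst1} there is a gain $K$ with $A+BK$ Schur stable; setting $F=-KP$ so that $AP-BF=(A+BK)P$, and choosing $P=\Pi^{-1}$ with $\Pi\succ0$ solving the strict Lyapunov inequality $\Pi\succ(A+BK)'\Pi(A+BK)+Q+K'RK$ (solvable with a margin, since the discrete Lyapunov operator is positive definite for a stable $A+BK$), makes the reduced LMI hold strictly. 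The strict margin absorbs the $O(1/\lambda)$ correction, so the full LMI holds for all sufficiently large $\lambda$, producing a feasible point. Since the $(3,3)$ principal block forces $\lambda\ge0$, the infimum $\mu\eqbyd\inf\lambda/2$ is finite and nonnegative.

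For compactness, fix $\lambda_0>\mu$ so that $\mathcal S\eqbyd\{(\lambda,P,F)\text{ feasible}:\lambda\le\lambda_0\}$ is nonempty; boundedness of $\lambda$ is then immediate. To bound $P$ and $F$ I would take the Schur complement of the $(4,4)$ block $I$; because the $(2,4)$ and $(3,4)$ blocks vanish, this leaves in the $(1,1)$ position the matrix $P-(P\hat Q'-F'\hat R')(P\hat Q'-F'\hat R')'=P-PQP-F'RF$, which is positive semidefinite as the leading block of a positive semidefinite matrix. From $P\succeq PQP$ together with $Q\succeq\lambda_{\min}(Q)I$ and $\lambda_{\min}(Q)>0$ (\cref{asst4}) one gets, eigenvalue by eigenvalue, $P\preceq\lambda_{\min}(Q)^{-1}I$; then $F'RF\preceq P\preceq\lambda_{\min}(Q)^{-1}I$ with $R\succeq\lambda_{\min}(R)I>0$ bounds $\norm{F}$. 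Hence $\mathcal S$ is closed and bounded, therefore compact.

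Finally, the Weierstrass theorem yields a minimizer $(\lambda^*,P^*,F^*)\in\mathcal S$ of the continuous objective, and since $\lambda_0>\mu$ this minimizer attains the global infimum $\mu$. A minimizer is feasible, and applying the zero-block argument to the $(3,3)$ block together with $G\ne0$ (which follows from \cref{asst3}) rules out $\lambda=0$, so in fact $\lambda^*>0$; the same Schur-complement Riccati structure forces $P^*\succ0$, consistent with $\Pi^*=(P^*)^{-1}$. The main obstacle I anticipate is the compactness step, specifically extracting $P-PQP-F'RF\succeq0$ in the correct orientation from the block structure and converting it into a priori bounds on $P$ and $F$; feasibility for large $\lambda$ and the concluding Weierstrass argument are comparatively routine once the reduction to the stabilization LMI is in place.
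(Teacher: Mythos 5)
Your proof is correct, and it takes a genuinely more self-contained route than the paper: the paper proves \cref{prop:lmi-existence-stage} purely by citation, deferring to the compactness and existence arguments of Proposition~19 in the companion paper \citep{mannini:rawlings:2026b} with the remark that only the $4\times4$ LMI survives in the stage bound case. Your direct-method argument reconstructs what that citation hides, and the two ingredients you supply explicitly both check out. For feasibility, the reduction is sound: the third block row couples only to the second, so eliminating the $(3,3)$ block $\lambda I$ perturbs the $(2,2)$ block by a term of norm $O(1/\lambda)$, and a strictly feasible point of the reduced stabilization LMI---built from a stabilizing $K$ (\cref{asst1}), $F=-KP$, and $P=\Pi^{-1}$ with $\Pi$ solving the strict Lyapunov inequality $\Pi\succ(A+BK)'\Pi(A+BK)+Q+K'RK$, which under congruence by $\Pi$ is exactly the Schur complement condition for the reduced matrix---absorbs that perturbation for $\lambda$ large. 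For compactness, your Schur-complement algebra is right: since the $(2,4)$ and $(3,4)$ blocks vanish, eliminating the identity block leaves $P-PQP-F'RF\succeq0$ in the $(1,1)$ corner, and $PQP\succeq\lambda_{\min}(Q)P^2$ with $\lambda_{\min}(Q)>0$ (\cref{asst4}, which is precisely where that assumption earns its keep) yields $P\preceq\lambda_{\min}(Q)^{-1}I$ by simultaneous diagonalization of $P$ and $P^2$, after which $F'RF\preceq P$ and $R\succ0$ bound $\norm{F}$; closedness of the affine preimage of the semidefinite cone then gives a compact sublevel set and Weierstrass finishes. What your self-contained version buys is transparency about exactly which assumptions drive which step; what the paper's citation buys is uniformity with the signal bound case. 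One caveat: your closing assertion that $P^*\succ0$ is ``forced by the same Schur-complement Riccati structure'' is not actually justified---the LMI by itself does not exclude a singular $P$ at the optimum (a null vector $v$ of $P$ merely forces $Fv=0$ and $(AP-BF)v=0$, which is self-consistent)---but this claim, like your valid $\lambda^*>0$ argument via the zero $(3,3)$ block and $G\neq0$ from \cref{asst3}, is an optional refinement: the proposition asserts only existence of a minimizer, which your argument fully delivers.
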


\begin{proof}
The proof follows the approach of Proposition 19 in \citep{mannini:rawlings:2026b}. The stage bound problem in this paper requires only the single $4\times 4$ LMI constraint, whereas the signal bound problem in Proposition 19 considers both the $4\times 4$ LMI and an additional $2\times 2$ LMI constraint. With this modification, the compactness and existence arguments carry through unchanged.
\end{proof}

\section{Numerical Examples}
\label{sec:num-example}

The following examples illustrate the theoretical properties of the StDAR and compare its solution to the signal bound disturbance attenuation regulator (SiDAR) developed in \citep{mannini:rawlings:2026b}. While this paper focuses on stage bound constraints, comparing the two problems reveals fundamental differences in their optimal policies and computational requirements. While all analytical results apply to arbitrary dimension $n$, scalar cases are presented to facilitate visualization.

The signal bound problem from \citep{mannini:rawlings:2026b} imposes a single aggregate constraint over the entire horizon
\[
	\bbW_{si}
	\eqbyd
	\Bigl\{
	\wseq
	 \mid
	\sum_{k=0}^{N-1}|w_k|^2 \leq \alpha
	\Bigr\}
\]
and poses the disturbance attenuation optimization
\begin{equation}
    V^*(x_0) \eqbyd \min_{u_0}\max_{w_0} \; \min_{u_1}\max_{w_1} \; \cdots \min_{u_{N-1}}\max_{w_{N-1}} \;
 \frac{V(x_0, \useq, \wseq)}{\sum^{N-1}_{k=0} |w_k|^2 } \; \; \wseq \in \bbW_{si} \label{signaldp}
\end{equation}
subject to \eqref{system}. In contrast to the stage bound problem \eqref{stagedp-w}, the signal bound problem uses a single scalar multiplier $\lambda$ for the entire horizon, while the stage bound problem employs a vector of multipliers $\boldsymbol{\lambda}_0 = (\lambda_0,\ldots,\lambda_{N-1})$ with one per stage.

\subsection{StDAR vs SiDAR Comparison}
Consider the scalar system
\begin{equation*}
   \begin{aligned}
   	A = 1 \; \; \; B = 1 \; \; \; G = 1 \; \; \; R = 1 \; \; \; Q = 0.2 \; \; \; P_f=0.25
   \end{aligned} 
\end{equation*}
and disturbance bound
\begin{align*}
&|w_k|^2 = 1 \;\text{ for all } k \quad\text{(StDAR \eqref{stagedp-w})}\\
&\sum_{k=0}^{N-1} |w_k|^2 = N \quad\text{(SiDAR \eqref{signaldp})}
\end{align*}
chosen so that both problems have equivalent total disturbance energy budgets. For this system the steady-state LQR Riccati matrix is $P_{LQR}=0.55$, and the steady-state $H_{\infty}$ state feedback Riccati matrix is $P_{\infty}=1.2$.
\begin{figure*}[t]
\centering
\includegraphics[width=\textwidth]{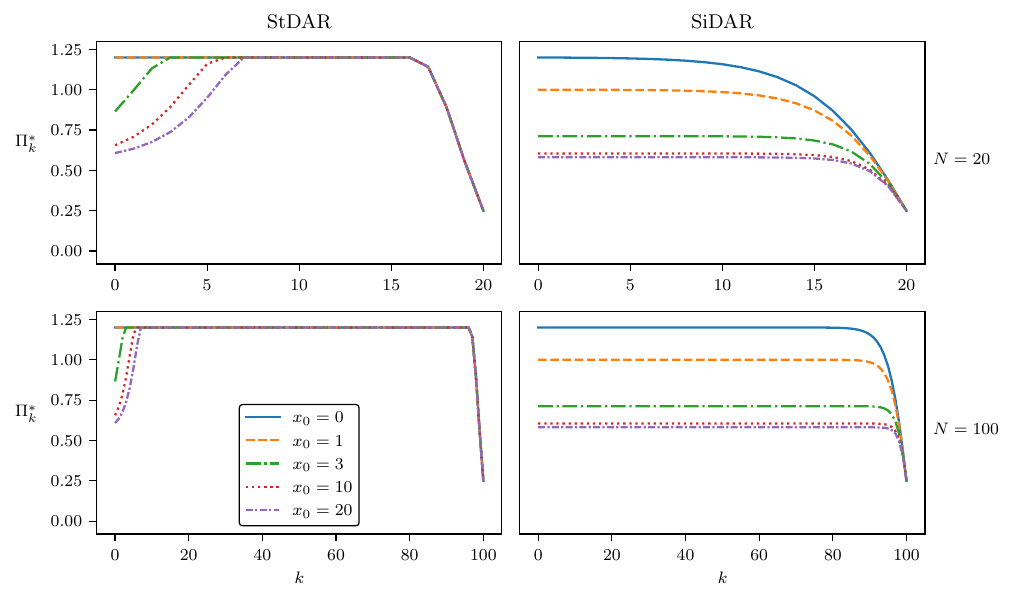}
\caption{$\Pi_k$ as a function of stage $k$ for the finite horizon StDAR \eqref{stagedp-w} (left) and SiDAR \eqref{signaldp} (right), for $N=20$ (top) and $N=100$ (bottom). The dashed black line represents the $H_{\infty}$ steady-state value $P_{\infty}=1.2$; the solid black line represents the LQR value $P_{LQR}=0.55$.}
\label{fig:1dplot}
\end{figure*}
We solve both problems for initial states $x_0 \in \{0, 1, 3, 10, 20\}$ with horizon lengths $N=20$ and $N=100$. For the SiDAR, state $\{0\}$ lies in the linear control region $\mathcal{X}_L$ and states $\{1, 3, 10, 20\}$ lie in the nonlinear control region $\mathcal{X}_{NL}$.

\Cref{fig:1dplot} shows the optimal $\Pi_k$ as a function of stage $k$ for both problems. For the SiDAR (right panels), initial states in $\mathcal{X}_L$ converge quickly to $P_{\infty} = 1.2$, while states in $\mathcal{X}_{NL}$ exhibit monotonic convergence toward values between $P_{\infty}$ and $P_{LQR}$ depending on $x_0$. The recursion is monotonic in $k$ for all trajectories.

In contrast, the StDAR (left panels) exhibits fundamentally different behavior. For larger initial states, the recursion displays a nonmonotonic turnpike: $\Pi_k$ initially varies between $P_{\infty}$ and $P_{LQR}$, then stabilizes near $P_{\infty}$ for the majority of the horizon, before departing to satisfy the terminal condition $P_f=0.25$. This turnpike behavior becomes more pronounced with longer horizons, with the plateau length increasing as $N$ grows. For smaller initial states, the recursion converges more directly to $P_{\infty}$ and remains there for most of the horizon.

This fundamental difference stems from the disturbance constraint. The stage bound problem allows distinct multipliers $\lambda_k$ at each stage, enabling different recursion patterns depending on the initial state magnitude. In contrast, the signal bound problem enforces a single multiplier $\lambda$ across the entire horizon, yielding monotonic convergence for all trajectories.

\begin{remark}[Non-monotonic turnpike]\label{rem:turnpike-stage}
When the stage bound sequence is time invariant ($\alpha_k \eqbyd \alpha$ for all $k$), numerical evidence across all tested instances indicates that the StDAR backward recursion $\Pi_{k}$ exhibits a nonmonotonic turnpike for sufficiently large initial states: independently of $x_0$, the recursion rapidly approaches the steady-state $H_{\infty}$ value, remains close to it for most of the horizon, and departs only near the endpoints to satisfy boundary conditions. The plateau length increases with $N$. Establishing general conditions under which this pattern holds remains an open question.
\end{remark}

\subsection{Stage bound vs Signal bound Policy Comparison}
Consider the scalar system
\begin{equation*}
A=0.5 \; \; \; B=1 \; \; \; G=1 \; \; \; R=1 \; \; \; Q=0.25 \; \; \; P_f=0.25
\end{equation*}
with horizon $N=20$. \Cref{fig:stage_vs_signal} compares the optimal control policies for the StDAR \eqref{stagedp-w} with stage bound $\alpha_k = 0.05$ for all $k$ (top panel) and the SiDAR \eqref{signaldp} with fixed disturbance budget $b_0=1$ (bottom panel). Note that $b_0=\alpha$. The stage bound policy $u^*_0(x_0; \boldsymbol{\alpha}_{0:N-1})$ depends on the vector of remaining stage bound as a parameter, while the signal bound policy $u^*_0(x_0, b_0)$ depends on the budget state $b_0$ as a measured state variable.

The bottom panel shows the SiDAR linear region $\mathcal{X}_L$ (shaded) where the control policy is linear in the state. Outside this region, the SiDAR policy is nonlinear. The stage bound problem (top panel) exhibits qualitatively different optimal control. This difference stems from the constraint: stage bound constraints impose separate limits at each time step via a vector of multipliers $\boldsymbol{\lambda}_0 = (\lambda_0,\ldots,\lambda_{N-1})$, while signal bound constraints impose a single aggregate limit via a scalar multiplier $\lambda$.
\begin{figure}
\centering
\includegraphics[width=1\linewidth]{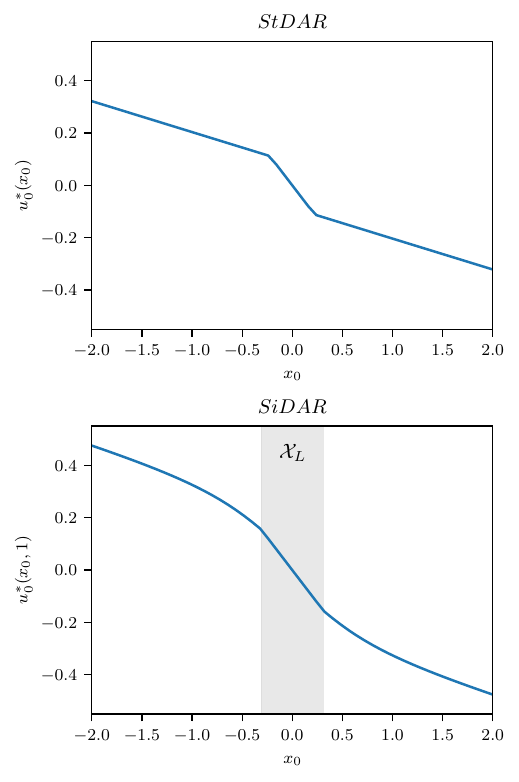}
\caption{Optimal control policies for scalar system with $N=20$. Top: stage bound $u^*_0(x_0)$ with $\alpha_k = 0.05$ for all $k$. Bottom: signal bound $u^*_0(x_0, b_0)$ with budget $b_0=\alpha=1$. Shaded region in bottom panel indicates $\mathcal{X}_L$ where SiDAR policy is linear in $x_0$.}
\label{fig:stage_vs_signal}
\end{figure}

\subsection{Computational Complexity}
To assess computational performance, the LMI optimization from \eqref{inflmi-stage} was evaluated for state and input dimensions $n = m$ ranging from 2 to 78 (in increments of 4) on randomly generated stable systems. Four independent problem instances were solved at each dimension using the MOSEK solver~\citep{mosek:2025}. The empirical runtime is approximately cubic in $n$ for the dimensions tested. The worst-case theoretical complexity of interior point methods on semidefinite programs exceeds cubic in $n$ in general \citep[Ch.~11]{boyd:vandenberghe:2004}; the observed cubic scaling provides numerical evidence that the LMI based steady-state StDAR problem is computationally tractable for the dimensions tested.

\section{Summary}
\label{sec:end}
This paper derives a finite horizon recursive feedback solution to the StDAR for linear systems via dynamic programming. The problem accommodates arbitrary initial states throughout the state space, addressing theoretical gaps in existing literature where solutions were valid only at the origin.

The optimal state feedback policy is nonlinear in the state and requires solving a tractable convex optimization for the Lagrange multiplier vector at each stage; the control gain is then explicit. Unlike the SiDAR, where a single scalar multiplier enforces the aggregate budget, the StDAR requires a vector of multipliers—one for each remaining stage.

For systems with constant stage bound, the steady-state StDAR reduces to a tractable linear matrix inequality whose empirical computational cost is approximately cubic in $n$.

Numerical examples demonstrate that the backward recursion for StDAR exhibits qualitatively different behavior from the SiDAR recursion, including nonmonotonic turnpike phenomena for sufficiently large initial states.

The StDAR differs fundamentally from the SiDAR. Signal bound constraints impose a single aggregate energy limit over the entire horizon via a scalar multiplier. Stage bound constraints impose separate limits at each time step via a vector of Lagrange multipliers. This difference produces distinct solution properties and naturally accommodates time-varying disturbance bounds without requiring temporal coordination of disturbance realizations.

\bibliographystyle{abbrvnat}
\bibliography{paper_arxiv}

\section{Appendix}
\label{sec:props}

In this appendix, we compile the fundamental results used throughout this paper. To compactly state the results in this section it is convenient to define two functions
\begin{align}
M(\lambda) &\eqbyd \begin{bmatrix} M_{11} & M_{12} \\ M'_{12} & M_{22}-\lambda I \end{bmatrix} \label{eq:Mlam}\\
L(\lambda) &\eqbyd - (1/2) d'M^{\dagger}(\lambda)d + \lambda/2 \label{eq:Llam}
\end{align}
and the definition of the Schur complements
\begin{align*}
\tM_{11}(\lambda) \eqbyd (M_{22}-\lambda I) - M_{12}'M_{11}^{\dagger} M_{12} \\
\tM_{22}(\lambda) \eqbyd M_{11}-M_{12}(M_{22}-\lambda I)^{\dagger} M_{12}'
\end{align*}
The proofs for propositions \ref{prop:conlag}--\ref{prop:conminmaxalt} are found in Rawlings et al. \citep{rawlings:mannini:kuntz:2024,rawlings:mannini:kuntz:2024b}. The remaining propositions include their proofs.

We present the following classical result, stated here without proof, to justify the interchange of minimization and maximization.

\begin{theorem}[Minimax Theorem]
\label{th:minimax}
Let $U \subset \bbR^m$ and $W \subset \bbR^q$ be compact convex sets. If 
$V: U \times W \to \bbR$ is a continuous function that is convex-concave, i.e.,
$V(\cdot ,w):U \to \bbR$ is convex for all $w \in W$, and
$V(u, \cdot ):W \to \bbR$ is concave for all $u \in U$\\
Then we have that
\begin{equation*}
\min_{u \in U} \max_{w \in W} V(u,w) = \max_{w \in W} \min_{u \in U} V (u,w) 
\end{equation*}
\end{theorem}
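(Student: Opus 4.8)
The plan is to prove the two inequalities separately. The inequality $\max_{w\in W}\min_{u\in U} V(u,w) \le \min_{u\in U}\max_{w\in W} V(u,w)$ (weak duality) holds with no convexity assumptions: for any fixed $u'\in U$ and any $w\in W$ one has $\min_{u} V(u,w) \le V(u',w) \le \max_{w'} V(u',w')$, so taking the maximum over $w$ on the left and then the minimum over $u'$ on the right yields the claim. Continuity of $V$ together with compactness of $U$ and $W$ guarantees that all the extrema are attained, so every expression is well defined. The substantive content is the reverse inequality, which I would obtain by establishing the existence of a saddle point $(u^\star,w^\star)$ satisfying $V(u^\star,w)\le V(u^\star,w^\star)\le V(u,w^\star)$ for all $(u,w)\in U\times W$.

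To produce a saddle point I would invoke a fixed-point argument. Define the best-response correspondences $B_U(w)\eqbyd\arg\min_{u\in U} V(u,w)$ and $B_W(u)\eqbyd\arg\max_{w\in W} V(u,w)$. Because $V(\cdot,w)$ is convex and continuous on the compact convex set $U$, each $B_U(w)$ is nonempty, compact, and convex; by concavity of $V(u,\cdot)$ the same holds for each $B_W(u)$. Continuity of $V$ together with compactness of the domains makes both correspondences upper hemicontinuous (Berge's maximum theorem). The product map $\Phi(u,w)\eqbyd B_U(w)\times B_W(u)$ on the compact convex set $U\times W$ then has nonempty convex compact values and closed graph, so Kakutani's fixed-point theorem furnishes a point $(u^\star,w^\star)\in\Phi(u^\star,w^\star)$. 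Unpacking the fixed-point condition gives $u^\star\in B_U(w^\star)$ and $w^\star\in B_W(u^\star)$, which is exactly the saddle-point inequality above.

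Given the saddle point, the reverse inequality is immediate:
\[
\min_{u}\max_{w} V(u,w) \le \max_{w} V(u^\star,w) = V(u^\star,w^\star) = \min_{u} V(u,w^\star) \le \max_{w}\min_{u} V(u,w),
\]
and combining this with weak duality yields the asserted equality.

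I expect the main obstacle to be the verification of the hypotheses of Kakutani's theorem, specifically the upper hemicontinuity (closed graph) of the best-response correspondences; this is the step that cannot be shortcut and is precisely where continuity of $V$ and compactness of both domains are essential. The convex-concave assumption also plays a decisive role here: convexity of $V(\cdot,w)$ is exactly what guarantees that the minimizer sets $B_U(w)$ are convex (and dually for $B_W$), without which Kakutani does not apply. An alternative I would keep in reserve is Sion's connectedness-and-separation argument, which avoids fixed-point machinery at the cost of a more delicate inductive construction; but given the full strength of the present hypotheses (genuine convexity and concavity, continuity, and compactness of both sets), the fixed-point route is the most direct.
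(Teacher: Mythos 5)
Your proposal is correct, but there is nothing in the paper to compare it against: the paper explicitly presents this theorem as ``a classical result, stated here without proof,'' invoked only to justify the minimax interchanges in Propositions~\ref{prop:2stage-StDAR} and~\ref{prop:ndstage}. Your argument is the standard von Neumann--Nash route: weak duality is handled correctly (and needs no convexity), and the saddle-point construction via Kakutani is sound in every step you flag. In particular, $B_U(w)$ is nonempty and compact by Weierstrass, convex because the minimizer set of a convex function over a convex set is convex (dually for $B_W$ via concavity); Berge's maximum theorem applies because the constraint correspondence is constant, hence trivially continuous, giving upper hemicontinuity; and since the correspondences are compact-valued into a compact set, upper hemicontinuity with closed values is equivalent to the closed-graph condition Kakutani requires on the nonempty compact convex product $U \times W$. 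The chain
\[
\min_{u}\max_{w} V(u,w) \le \max_{w} V(u^\star,w) = V(u^\star,w^\star) = \min_{u} V(u,w^\star) \le \max_{w}\min_{u} V(u,w)
\]
then closes the argument. Two small observations: first, attainment of the outer extrema also uses that the marginal functions $u \mapsto \max_{w} V(u,w)$ and $w \mapsto \min_{u} V(u,w)$ are continuous (again Berge, or uniform continuity of $V$ on the compact product), which you implicitly rely on and could state; second, your remark about Sion is apt --- the hypotheses here are stronger than necessary, since Sion's theorem needs only quasiconvexity/quasiconcavity, semicontinuity, and compactness of one of the two sets --- but under the full convex-concave, continuous, doubly compact hypotheses of the statement, your fixed-point route is the most direct complete proof, and it would serve as a self-contained substitute for the paper's external citation.
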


We establish the equivalence between constrained optimization and Lagrangian minmax optimizations.

\begin{proposition}[Constrained minimization]
\label{prop:conlag}
Let $U \subseteq \bbR^n$ be a nonempty compact set and $V(\cdot): U \rightarrow \bbR$  be a continuous function on $U$, and $L(\cdot): U \times \bbR \rightarrow \bbR$ be defined as $L(u, \lambda) \eqbyd V(u) - \lambda \rho(u, U)$, where $\rho(u, U)$ denotes the distance from point $u$ to set $U$.
Consider the constrained optimization problem
\begin{equation}
\inf_{u \in U} V(u)
\label{eq:conmin}
\end{equation}
and the (unconstrained) Lagrangian minmax problem
\begin{equation}
\infp_{u} \sup_\lambda L(u,\lambda) 
\label{eq:Lgen}
\end{equation}
\begin{enumerate}
\item Solutions to both problems exist.
\item Let $V^*$ be the solution and $u^0$ be the set of optimizers of $\min_{u\in U} V(u)$. Let $L^*$ be the solution and $u^*$ the set of optimizers of $\min_u \max_\lambda L(u, \lambda)$.  Then 
\begin{equation*}
V^* = L^* \qquad u^0=u^* \qquad \olambda(u^*) = \bbR
\end{equation*}
\end{enumerate}
\end{proposition}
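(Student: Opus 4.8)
The plan is to evaluate the inner supremum $F(u)\eqbyd\sup_\lambda L(u,\lambda)=\sup_\lambda\bigl[V(u)-\lambda\,\rho(u,U)\bigr]$ pointwise in $u$, which collapses the Lagrangian minmax \eqref{eq:Lgen} onto the feasible set and reduces it to the constrained problem \eqref{eq:conmin}; existence then follows from the Weierstrass theorem. The argument is a case split on membership in $U$, exploiting that the distance penalty $\rho(\cdot,U)$ is an exact penalty: it vanishes exactly on $U$ and is strictly positive off $U$.

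First I would treat $u\in U$. There $\rho(u,U)=0$, so $L(u,\lambda)=V(u)$ for every $\lambda\in\bbR$; hence $F(u)=V(u)$ and, crucially, every multiplier attains the supremum. This single observation already delivers the multiplier set: since the penalty vanishes identically on $U$ with no first-order term to pin $\lambda$, the supremum in $\lambda$ is achieved by all of $\bbR$, giving $\olambda(u)=\bbR$ for each $u\in U$.

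Next I would treat $u\notin U$. Because $U$ is compact, hence closed, the distance is strictly positive, $\rho(u,U)>0$, so $\lambda\mapsto V(u)-\lambda\,\rho(u,U)$ is affine with nonzero slope and $F(u)=+\infty$ (take $\lambda\to-\infty$). Thus $F\equiv+\infty$ off $U$ while $F=V$ on $U$, so $\inf_u F(u)=\inf_{u\in U}V(u)$, i.e. $L^*=V^*$. Since $F$ is infinite outside $U$, every minimizer of $F$ lies in $U$ and there minimizes $V$, and conversely; hence the optimizer sets coincide, $u^*=u^0$.

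Finally, existence follows because $U$ is compact and $V$ is continuous, so Weierstrass yields a nonempty minimizer set $u^0$ with finite value $V^*$; the same points (paired with an arbitrary $\lambda\in\bbR$) attain $L^*=V^*$, and applying the first step at any $u^*\in u^0\subseteq U$ gives $\olambda(u^*)=\bbR$. The steps are routine; the only points needing care are the strict positivity $\rho(u,U)>0$ off $U$ (which rests on closedness of $U$, so that the outer infimum, though nominally over the ambient space on which $V$ is understood, is effectively confined to $U$) and the multiplier indeterminacy on $U$. I expect the latter — the fact that the exact distance penalty leaves $\lambda$ entirely free and so yields $\olambda(u^*)=\bbR$ — to be the conceptually distinctive point rather than a genuine difficulty.
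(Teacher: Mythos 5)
Your proof is correct. Note that the paper itself does not prove \cref{prop:conlag} --- it defers to the cited references \citep{rawlings:mannini:kuntz:2024,rawlings:mannini:kuntz:2024b} --- so there is no in-paper argument to compare against; your exact-penalty route (evaluating $F(u)=\sup_\lambda L(u,\lambda)$ pointwise, getting $F=V$ on $U$ with every $\lambda$ attaining the supremum, and $F=+\infty$ off $U$ via $\lambda\to-\infty$, then Weierstrass on the compact set $U$) is the standard argument and establishes all claimed items, including $\olambda(u^*)=\bbR$. You also correctly flagged the one genuine wrinkle in the statement: $L$ is declared on $U\times\bbR$, on which the penalty $\lambda\,\rho(u,U)$ vanishes identically, so the minmax is only non-vacuous under the intended reading that the outer infimum ranges over the ambient space with $V$ understood there; your case split handles exactly that reading, and closedness of the compact $U$ gives the strict positivity $\rho(u,U)>0$ off $U$ that your second case needs.
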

We solve the constrained maximization of quadratic functions over unit spheres.

\begin{proposition}[Constrained quadratic optimization]
\label{prop:conquad}

Define the \textit{convex} quadratic function, $V(\cdot): \bbR^{n} \rightarrow \bbR$ and compact constraint set $\bbW$
\begin{equation*}
V(w) \eqbyd \half w'Dw + w'd \qquad 
\bbW \eqbyd \{ w \mid w'w = 1\}
\end{equation*}
with $D \in \bbR^{n\times n} \geq 0$. Consider the constrained maximization problem
\begin{equation}
\max_{w \in \bbW} V(w)
\label{eq:maxcon}
\end{equation}
Define the Lagrangian function
\begin{equation*}
L(w, \lambda) = V(w) - \half \lambda (w'w-1)
\end{equation*}
and the (unconstrained) Lagrangian problem
\begin{equation}
\max_{w} \min_{\lambda}  L(w, \lambda)
\label{eq:maxmincon}
\end{equation}
and the (unconstrained) dual Lagrangian problem
\begin{equation}
\min_{\lambda} \max_{w} L(w, \lambda)
\label{eq:minmaxcon}
\end{equation}

\begin{enumerate}
\item  Solutions to all three problems \eqref{eq:maxcon}, \eqref{eq:maxmincon}, and \eqref{eq:minmaxcon} exist for all $D \geq 0$ and $d \in \bbR^n$ with optimal value
\begin{equation*}
V^* = L^0 = - (1/2) d'(D-\lambda_P I)^{\dagger} d + \lambda_P/2
\end{equation*}
where
\begin{equation}
\lambda_P \eqbyd  \text{largest real eigenvalue of $P$},
\quad 
P \eqbyd \begin{bmatrix} D & I \\ dd' & D \end{bmatrix}
\label{eq:Pdef}
\end{equation}

\item Problems \eqref{eq:maxmincon} and \eqref{eq:minmaxcon} satisfy strong duality,
and the function $L(w, \lambda)$ has saddle points (sets) $(w^*, \lambda^*)$ given by
\begin{align*}
w^* &= \begin{cases} \bigg( -(D-\lambda_P I)^{\dagger} d + N(D-\lambda_P I) \bigg)  \cap W, \; \; 
 \lambda_P = \norm{D} \\
                     -(D-\lambda_P I)^{-1} d, \qquad \lambda_P > \norm{D}
\end{cases} \\
\lambda^* &= \lambda_P
\end{align*}

\item The optimizer of \eqref{eq:maxcon} is given by $w^0 = w^*$. 

\item 
The optimizer of \eqref{eq:maxmincon} is given by 
\begin{equation*}
w^0 = w^* \qquad 
\underline{\lambda}^0(w^0) = \bbR
\end{equation*}

\item  The optimizer of \eqref{eq:minmaxcon} is given by
\begin{align*}
\ow^0(\lambda^0) &= \begin{cases} -(D-\lambda_P I)^{\dagger} d + N(D-\lambda_P I), \ & \lambda_P = \norm{D} \\
                     -(D-\lambda_PI)^{-1} d, \ & \lambda_P > \norm{D}
\end{cases} \\
\lambda^0 &= \lambda_P 
\end{align*}

\item Additionally $\lambda_P = \norm{D}$ if and only if (i) $d \in \mathcal{R}( D - \norm{D} I)$ and (ii) $\norm{(D-\norm{D}I)^{\dagger} d} \leq 1$. If (i) or (ii) do not hold, then $\lambda_P > \norm{D}$ and $\norm{(D-\lambda_P I)^{-1}d} = 1$.
\end{enumerate}
\end{proposition}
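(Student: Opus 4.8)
The plan is to treat \eqref{eq:maxcon} as a sphere-constrained quadratic maximization, establish strong Lagrangian duality, and then read off the optimizers. First I would dispose of existence: the sphere $\bbW$ is compact and $V$ is continuous, so by Weierstrass the primal \eqref{eq:maxcon} attains its maximum. For the max-min problem \eqref{eq:maxmincon}, since $L(w,\lambda) = V(w) - \half\lambda(w'w - 1)$ is affine in $\lambda$ with slope $-\half(w'w - 1)$, the inner minimization gives $\min_\lambda L(w,\lambda) = -\infty$ unless $w'w = 1$, in which case it equals $V(w)$; hence \eqref{eq:maxmincon} reduces to the primal, its optimizer is $w^0 = w^*$, and every $\lambda$ attains the inner minimum so the multiplier set is all of $\bbR$. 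This settles parts 3 and 4 and the weak-duality inequality $\max_w\min_\lambda L \le \min_\lambda\max_w L$ (cf.\ \cref{prop:conlag}).

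Next I would compute the dual function. Writing $L(w,\lambda) = \half w'(D - \lambda I)w + w'd + \half\lambda$, the inner maximization $\max_w L(w,\lambda)$ is finite iff $D - \lambda I \preceq 0$, i.e.\ $\lambda \ge \norm{D}$ (using $D \succeq 0$, so $\norm{D} = \lambda_{\max}(D)$). On $\lambda > \norm{D}$ the maximizer is $w = -(D-\lambda I)^{-1}d$, giving dual function $g(\lambda) = -\half d'(D - \lambda I)^{-1}d + \half\lambda$, which extends continuously to $\lambda = \norm{D}$ via the pseudoinverse. Differentiating, $g'(\lambda) = \half(1 - \norm{(D-\lambda I)^{-1}d}^2)$, so $g$ is convex on $[\norm{D},\infty)$, grows like $\half\lambda$ at infinity, and its minimizer is characterized by the secular equation $\norm{(D - \lambda I)^{-1}d} = 1$ when a root exists in $(\norm{D},\infty)$, and by the boundary $\lambda = \norm{D}$ otherwise.

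Then I would make the eigenvalue identification. Using the block determinant identity one finds $\det(P - \mu I) = \det((D-\mu I)^2 - dd')$, and the matrix determinant lemma gives $\det(P-\mu I) = \det((D-\mu I)^2)\bigl(1 - d'(D-\mu I)^{-2}d\bigr)$. Hence every eigenvalue $\mu$ of $P$ that is not already an eigenvalue of $D$ satisfies exactly the secular equation $d'(D-\mu I)^{-2}d = 1$. On $(\norm{D},\infty)$ the map $\lambda \mapsto \norm{(D-\lambda I)^{-1}d}^2$ is strictly decreasing to $0$, so at most one such root exists; it is therefore the largest real eigenvalue $\lambda_P$ of $P$, and it coincides with the dual minimizer. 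Substituting $\lambda = \lambda_P$ into $g$ yields the claimed value $L^0 = -\half d'(D - \lambda_P I)^{\dagger} d + \lambda_P/2$, with dual maximizer $\bar w^0 = -(D-\lambda_P I)^{-1}d$ (part 5). Part 6 follows from the boundary analysis: $\lambda_P = \norm{D}$ precisely when the right derivative $g'(\norm{D}^+) \ge 0$, which forces the limit $\norm{(D - \norm{D}I)^{\dagger} d}$ to be finite — equivalently $d \in \mathcal{R}(D - \norm{D}I)$ — and at most $1$; otherwise the unique interior root gives $\lambda_P > \norm{D}$ with $\norm{(D-\lambda_P I)^{-1}d} = 1$.

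The hard part will be establishing strong duality (zero duality gap) together with the saddle-point structure of part 2, especially in the \emph{hard case} $\lambda_P = \norm{D}$ where $D - \lambda_P I$ is singular. Here I would invoke the hidden convexity of the sphere-constrained quadratic (equivalently the S-lemma): feasibility of the secular/pseudoinverse construction exhibits a point $w^*$ on the sphere attaining the dual value, which then matches the primal value and certifies no gap. In the hard case the stationarity equation $(D - \lambda_P I)w = -d$ is consistent on $\mathcal{R}(D - \norm{D}I)$ but underdetermined, so $w^*$ is recovered as $-(D - \lambda_P I)^{\dagger} d$ plus a component in $\mathcal{N}(D - \norm{D}I)$ chosen to restore $\norm{w^*} = 1$; verifying that this intersection with $\bbW$ is nonempty (using condition (ii) of part 6) and that $(w^*,\lambda_P)$ is a genuine saddle point is the delicate step that closes parts 1 and 2.
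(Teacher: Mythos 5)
Your proposal is essentially correct, but note first that the paper contains no in-text proof of this proposition: it is stated in the appendix with its proof deferred to the cited references (Rawlings et al.), so your attempt must be judged on its own merits rather than against an internal argument. On those terms, the route you take --- collapsing the max--min problem \eqref{eq:maxmincon} to the primal via the affine-in-$\lambda$ inner minimization (which settles items 3 and 4), computing the dual function $g(\lambda) = -(1/2)\,d'(D-\lambda I)^{-1}d + \lambda/2$ for $\lambda > \norm{D}$, characterizing its minimizer by the secular equation $\norm{(D-\lambda I)^{-1}d} = 1$, identifying that minimizer with the largest real eigenvalue of $P$ through $\det(P-\mu I) = \det\bigl((D-\mu I)^2 - dd'\bigr)$ and the matrix determinant lemma, and closing the duality gap in the hard case by the pseudoinverse-plus-nullspace construction --- is the standard trust-region/secular-equation argument, and it does deliver all six items. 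Three points you left implicit should be made explicit to make the sketch airtight. First, your statement that $\max_w L(w,\lambda)$ is ``finite iff $\lambda \ge \norm{D}$'' is slightly off at the boundary: at $\lambda = \norm{D}$ finiteness additionally requires $d \in \mathcal{R}(D - \norm{D}I)$; this is harmless because the dual minimum lands at the boundary only in exactly that case, but it should be said. Second, the identification $\lambda_P = \norm{D}$ in the hard case requires verifying that $\norm{D}$ actually \emph{is} an eigenvalue of $P$; this follows because $\mathcal{N}(D-\norm{D}I)$ is nontrivial (as $\norm{D} = \lambda_{\max}(D)$ for $D \succeq 0$) and $d \in \mathcal{R}(D-\norm{D}I) = \mathcal{N}(D-\norm{D}I)^{\perp}$ by symmetry of $D$, so every $v \in \mathcal{N}(D-\norm{D}I)$ satisfies $\bigl((D-\norm{D}I)^2 - dd'\bigr)v = 0$ and your determinant identity applies; the complementary trichotomy (a nonzero nullspace component of $d$ forces the secular function to blow up as $\lambda \downarrow \norm{D}$, producing an interior root) then gives item 6 in both directions. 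Third, in the hard case the construction of $w^*$ works precisely because $-(D-\lambda_P I)^{\dagger}d$ lies in $\mathcal{R}(D-\lambda_P I)$, orthogonal to $\mathcal{N}(D-\lambda_P I)$, so condition (ii) lets you scale the nullspace component to achieve $\norm{w^*} = 1$ by Pythagoras; stationarity $(D-\lambda_P I)w^* = -d$, concavity of $L(\cdot,\lambda_P)$ from $D - \lambda_P I \preceq 0$, and the vanishing $\lambda$-slope of $L(w^*,\cdot)$ at $\norm{w^*}^2 = 1$ then give the saddle-point inequalities, hence strong duality and items 1, 2, and 5. These are exactly the ``delicate steps'' you flagged; each is a few lines, and with them filled in the proof is complete.
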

We establish the existence and solutions of unconstrained minmax and maxmin problems for quadratic functions.
\begin{proposition}[Minmax of quadratic functions]
\label{prop:sdparalt}

Consider the Lagrangian function $L(\cdot): \bbR^{m+n+1} \rightarrow \bbR$
\begin{align*}
L(u,w,\lambda) &\eqbyd (1/2) \begin{bmatrix} u \\ w \end{bmatrix}'
M(\lambda) \begin{bmatrix} u \\ w \end{bmatrix}  +
 \begin{bmatrix} u \\ w \end{bmatrix}' d + \lambda/2
\end{align*}
with $M_{11}>0$, $M(0) \geq 0$, and the two problems
\begin{equation}
\min_u \max_w L(u,w,\lambda) \qquad \max_w \min_u L(u,w,\lambda)
\label{eq:towprobs}
\end{equation} 

We establish existence of solutions as a function of (decreasing) parameter $\lambda$.
\begin{enumerate}
\item For $\lambda > \norm{M_{22}}$: Solutions to both problems exist for all $d \in \bbR^{m+n}$.

\item  For $\lambda = \norm{M_{22}}$, we have the following two cases:
\begin{enumerate}
 \item For $d \in R(M(\norm{M_{22}}))$: The solutions to both problems exist.
 \item For $d \notin R(M(\norm{M_{22}}))$: Neither problem has a solution.
\end{enumerate} 

If $M(0)$ is such that $\normf{\tM_{11}} < \norm{M_{22}}$, then we have the following cases.
\item \label{case:int}
 For $\normf{\tM_{11}} < \lambda < \norm{M_{22}}$: Only the solution to the $\max_w \min_u L$ problem exists, and it exists for all $d \in \bbR^{m+n}$.
\item \label{case:edge}
 For $\lambda = \normf{\tM_{11}}$, we have the following two cases:
\begin{enumerate}
 \item For $d \in R(M(\normf{\tM_{11}}))$: Only the solution to the $\max_w \min_u L$ problem exists.
 \item For $d \notin R(M(\normf{\tM_{11}}))$: Neither problem has a solution.
\end{enumerate} 

\item For $\lambda < \normf{\tM_{11}}$:  Neither problem has a solution. 
\end{enumerate} 
If $M(0)$ is such that $\normf{\tM_{11}} = \norm{M_{22}}$, then cases \ref{case:int} and \ref{case:edge} do not arise.
    
For $d \in \mathcal{R}(M(\lambda))$ denote the stationary points $(u^*(\lambda), w^*(\lambda))$ by
\begin{equation*}
\begin{bmatrix}u^* \\ w^* \end{bmatrix}(\lambda) \in -M^{\dagger}(\lambda) d + N(M(\lambda))
\end{equation*} 
When solutions to the respective problems exist, we have that 
\begin{gather*}
u^*(\lambda) = \arg \min_u \max_w L(u,w,\lambda) \\
w^*(\lambda) = \arg \max_w \min_u L(u,w,\lambda) \\
L^0(\lambda) = -(1/2)  d' M^{\dagger}(\lambda) d + \lambda/2 
\end{gather*} 
and the inner optimizations, $\uu^0(w^*(\lambda))$ and $\ow^0(u^*(\lambda))$, are given by all solutions  to
\begin{equation*}
 \begin{array}{lccccl}
M_{11}  &\uu^0(w^*)  &+  &M_{12}             &w^*        &= - d_1\\
M_{12}'  &u^*        &+  &(M_{22}-\lambda I) &\ow^0(u^*) &=  -d_2
\end{array} 
\end{equation*} 
or, after solving,
\begin{equation*}
 \begin{array}{lcccl}
\uu^0(w^*) &= &-M_{11}^{\dagger} (M_{12}w^* + d_1) &+ &N(M_{11})\\
\ow^0(u^*) &= &-(M_{22}-\lambda I)^{\dagger} (M_{12}'u^* + d_2) &+ &N(M_{22}-\lambda I)
\end{array} 
\end{equation*} 
    
\end{proposition}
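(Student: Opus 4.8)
The plan is to reduce each of the two problems in \eqref{eq:towprobs} to a single quadratic optimization in one variable by partial optimization, and then apply the elementary criterion that, for symmetric $H$ and vector $b$, $\sup_x[(1/2)x'Hx+x'b]$ is finite if and only if $H\preceq0$ and $b\in\mathcal{R}(H)$: it equals $+\infty$ as soon as $H$ has a positive eigenvalue, and is finite for every $b$ when $H\prec0$, with the dual statement for the infimum. I would prove this one-line lemma by splitting $x$ along $\mathcal{N}(H)$ and $\mathcal{R}(H)$. The asymmetry between the two problems, and hence the two distinct thresholds $\normf{\tM_{11}}$ and $\norm{M_{22}}$, comes entirely from which diagonal block plays the role of the inner pivot.

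First I would dispatch the inner minimization over $u$, which is always well posed: since $M_{11}\succ0$, for fixed $(w,\lambda)$ the map $u\mapsto L$ is strictly convex with unique minimizer $u^*(w)=-M_{11}^{-1}(M_{12}w+d_1)$, and $\min_u L$ is a quadratic in $w$ whose Hessian is the Schur complement $\tM_{11}(\lambda)=S-\lambda I$, where $S\eqbyd\tM_{11}(0)=M_{22}-M_{12}'M_{11}^{-1}M_{12}$. The hypothesis $M(0)\succeq0$ together with $M_{11}\succ0$ gives $S\succeq0$ by Schur complement, and $M_{12}'M_{11}^{-1}M_{12}\succeq0$ forces $S\preceq M_{22}$, so that $\normf{\tM_{11}}=\lambda_{\max}(S)\le\lambda_{\max}(M_{22})=\norm{M_{22}}$. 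Applying the lemma to this reduced quadratic in $w$ settles the $\max_w\min_u$ problem: a maximizer exists iff $S-\lambda I\preceq0$, i.e. $\lambda\ge\normf{\tM_{11}}$, together with $d_2-M_{12}'M_{11}^{-1}d_1\in\mathcal{R}(S-\lambda I)$; by the standard range identity for Schur complements this reduced range condition is equivalent to $d\in\mathcal{R}(M(\lambda))$. This yields all $d$ for $\lambda>\normf{\tM_{11}}$, exactly $d\in\mathcal{R}(M(\lambda))$ at $\lambda=\normf{\tM_{11}}$, and none for $\lambda<\normf{\tM_{11}}$.

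For the $\min_u\max_w$ problem the inner pivot is instead $M_{22}-\lambda I$, so the lemma applied to the inner maximization in $w$ forces $M_{22}-\lambda I\preceq0$, i.e. $\lambda\ge\norm{M_{22}}$: if $\lambda<\norm{M_{22}}$ then $M_{22}-\lambda I$ has a positive eigenvalue, $\max_w L=+\infty$ for every $u$, and no solution exists. For $\lambda>\norm{M_{22}}$ the inner max is unique, and the outer reduced Hessian $\tM_{22}(\lambda)=M_{11}-M_{12}(M_{22}-\lambda I)^{-1}M_{12}'\succ0$ (since $-(M_{22}-\lambda I)^{-1}\succ0$) gives a unique outer minimizer for all $d$. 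I would then assemble the cases using $\normf{\tM_{11}}\le\norm{M_{22}}$ together with the observation that whenever $\lambda>\normf{\tM_{11}}$ strictly, $S-\lambda I\prec0$ and $M_{11}\succ0$ make $M(\lambda)$ invertible with inertia $(m,n,0)$ by Haynsworth additivity, so $\mathcal{R}(M(\lambda))=\bbR^{m+n}$ and every range condition on $d$ becomes vacuous; this is precisely why cases~\ref{case:int}--\ref{case:edge} only appear when $\normf{\tM_{11}}<\norm{M_{22}}$ and collapse when the two thresholds coincide. Finally, on the region where a solution exists both problems share the joint stationarity system $M(\lambda)[u^*;w^*]=-d$, giving $[u^*;w^*]\in-M^{\dagger}(\lambda)d+\mathcal{N}(M(\lambda))$, the common value $L^0(\lambda)=-(1/2)d'M^{\dagger}(\lambda)d+\lambda/2$, and the two block rows of this system as the stated inner-optimizer formulas.

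The hard part will be the degenerate boundary case $\lambda=\norm{M_{22}}$ of the $\min_u\max_w$ problem, where the inner $w$-Hessian $M_{22}-\norm{M_{22}}I$ is singular. There one cannot simply compose an unconstrained inner maximization with an unconstrained outer minimization, because the inner max is finite only for those $u$ with $M_{12}'u+d_2\in\mathcal{R}(M_{22}-\norm{M_{22}}I)$, so the outer problem is a constrained minimization over an affine set of $u$. I would resolve this by characterizing $\mathcal{R}(M(\norm{M_{22}}))=\mathcal{N}(M(\norm{M_{22}}))^{\perp}$ and showing that $\mathcal{N}(M(\norm{M_{22}}))$ consists only of pure $w$-directions lying in the top eigenspace of $M_{22}$; this simultaneously encodes the $u$-feasibility condition and identifies it with $d\in\mathcal{R}(M(\norm{M_{22}}))$, so the constrained outer minimization attains its value exactly when $d$ lies in that range, and the whole problem again reduces to solvability of the single linear system $M(\lambda)[u;w]=-d$. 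The analogous boundary $\lambda=\normf{\tM_{11}}$ for $\max_w\min_u$ is easier and is already covered by the reduced-quadratic analysis above, since there the singular object is the outermost pivot and no constrained subproblem arises.
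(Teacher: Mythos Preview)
Your proof strategy is sound. The paper does not actually prove this proposition inline; it defers to the cited references \citep{rawlings:mannini:kuntz:2024,rawlings:mannini:kuntz:2024b}, so there is no self-contained argument to compare against. The only window the paper gives into that external proof is through \cref{cor:sdparalt}, whose proof invokes ``item~3 in the proof of \cref{prop:sdparalt}'' and the fact that $M(\lambda)$ is full rank iff both $M_{11}$ and $\tM_{11}(\lambda)$ are full rank --- precisely the Schur-complement rank identity underlying your reduction. So your approach is consistent with, and likely very close to, what the referenced proof does.

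Two places where your sketch deserves a sentence of sharpening. First, your description of $\mathcal{N}(M(\norm{M_{22}}))$ as ``pure $w$-directions lying in the top eigenspace of $M_{22}$'' is slightly understated: the null vectors arise as $(u,w)$ with $u=-M_{11}^{-1}M_{12}w$ and $(S-\norm{M_{22}}I)w=0$, and you need the quadratic-form squeeze $w'Sw=\norm{M_{22}}w'w$ together with $S\le M_{22}$ to force both $M_{22}w=\norm{M_{22}}w$ and $M_{12}w=0$ (hence $u=0$); only then does $\mathcal{N}(M(\norm{M_{22}}))=\{0\}\times\bigl(\mathcal{N}(M_{12})\cap\mathcal{N}(M_{22}-\norm{M_{22}}I)\bigr)$, which is exactly the orthogonal complement of the $u$-feasibility condition. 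Second, once the $u$-feasibility set is nonempty, you should record that the constrained minimizer on this affine set satisfies the joint system $M(\lambda)\,[u^*;w^*]=-d$: the KKT multiplier for the affine constraint $P_{\mathcal{N}(M_{22}-\norm{M_{22}}I)}(M_{12}'u+d_2)=0$ is absorbed into the $\mathcal{N}(M_{22}-\norm{M_{22}}I)$-component of $w^*$, so stationarity of the constrained outer problem coincides with solvability of the block linear system. Neither of these is an obstruction; with them in place your case analysis covers all five items and the closing formulas.
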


\begin{corollary}
\label{cor:sdparalt}
    An unstated implication of \cref{prop:sdparalt}, is that case 2(b) is incompatible with either case 4(a) or 4(b). 
    In other words,  if $M(0)$ is such that $\normf{\tM_{11}} < \norm{M_{22}}$, then $M(\norm{M_{22}})$ is full rank, and its range is therefore $\bbR^{m+n}$, and there is no $d$ satisfying case 2(b).  We establish that implication here. 
\end{corollary}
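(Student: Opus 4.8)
The plan is to reduce everything to a Schur-complement criterion for invertibility, using the standing hypothesis $M_{11}>0$. Since $M_{11}$ is positive definite it is invertible and $M_{11}^{\dagger}=M_{11}^{-1}$, so the block matrix $M(\lambda)$ is invertible if and only if its Schur complement with respect to the $(1,1)$ block, namely $\tM_{11}(\lambda)=(M_{22}-\lambda I)-M_{12}'M_{11}^{-1}M_{12}$, is invertible. First I would record this equivalence explicitly, as it converts the full-rank claim about an $(m+n)\times(m+n)$ matrix into a statement about the $n\times n$ symmetric matrix $\tM_{11}(\lambda)$.

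Next I would introduce $S\eqbyd M_{22}-M_{12}'M_{11}^{-1}M_{12}=\tM_{11}(0)$, so that $\tM_{11}(\lambda)=S-\lambda I$. From $M(0)\succeq0$ together with $M_{11}\succ0$, the Schur-complement condition gives $S\succeq0$; moreover $M_{12}'M_{11}^{-1}M_{12}\succeq0$ yields $S\preceq M_{22}$, so $\lambda_{\max}(S)\le\norm{M_{22}}$, and the threshold $\normf{\tM_{11}}$ appearing in \cref{prop:sdparalt} is precisely $\lambda_{\max}(S)$. Thus the hypothesis $\normf{\tM_{11}}<\norm{M_{22}}$ is the strict inequality $\lambda_{\max}(S)<\norm{M_{22}}$. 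I would then evaluate at the value $\lambda=\norm{M_{22}}$ defining case~2: here $\tM_{11}(\norm{M_{22}})=S-\norm{M_{22}}I$ has eigenvalues $\lambda_i(S)-\norm{M_{22}}$, each bounded above by $\lambda_{\max}(S)-\norm{M_{22}}<0$, so $S-\norm{M_{22}}I\prec0$ and is invertible. By the Schur-complement equivalence, $M(\norm{M_{22}})$ is invertible, hence full rank, and therefore $\mathcal{R}(M(\norm{M_{22}}))=\bbR^{m+n}$.

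Finally I would conclude that every $d\in\bbR^{m+n}$ lies in $\mathcal{R}(M(\norm{M_{22}}))$, so the defining condition of case~2(b), that $d\notin\mathcal{R}(M(\norm{M_{22}}))$, is never satisfied under the hypothesis $\normf{\tM_{11}}<\norm{M_{22}}$. Since cases~4(a) and~4(b) arise only under that same hypothesis, and under it case~2(b) is vacuous, the two are incompatible, which is the asserted implication. I expect no genuine obstacle; the only points requiring care are confirming that $M_{22}$ and $S$ are symmetric (so the spectral-norm and largest-eigenvalue identifications are legitimate) and that the threshold $\normf{\tM_{11}}$ in \cref{prop:sdparalt} is correctly read as $\lambda_{\max}(S)$, which is exactly the $\lambda$ at which $\tM_{11}(\lambda)=S-\lambda I$ first loses invertibility.
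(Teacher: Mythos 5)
Your proposal is correct and follows essentially the same route as the paper's proof: both reduce invertibility of $M(\lambda)$ via the Schur complement with respect to $M_{11}\succ0$ to invertibility of $\tM_{11}-\lambda I$, observe $\tM_{11}\preceq M_{22}$ so that $\normf{\tM_{11}}\le\norm{M_{22}}$, and conclude that under the strict inequality $\normf{\tM_{11}}<\norm{M_{22}}$ the matrix $\tM_{11}-\norm{M_{22}}I$ is negative definite, making $M(\norm{M_{22}})$ full rank and case 2(b) vacuous. Your extra step identifying the threshold with $\lambda_{\max}(S)$ via $S\succeq0$ from $M(0)\succeq0$ is a harmless elaboration of what the paper takes as the meaning of $\normf{\tM_{11}}$.
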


\begin{proof}
First  note that from item 3 in the proof of \cref{prop:sdparalt}, we know that $M(\lambda)$ is full rank if and only if $M_{11}$ is full rank and $\tM_{11}(\lambda) \eqbyd \tM_{11} - \lambda I$ is full rank. We have that $M_{11}$ is full rank since $M_{11} > 0$ by assumption.  From its definition, we know that $\tM_{11}(\lambda)< 0$ for $\lambda > \normf{\tM_{11}}$, and is therefore full rank, which gives that $M(\lambda)$ is full rank for $\lambda > \normf{\tM_{11}}$.  

Note also that since $\tM_{11} = M_{22} - M_{12}'M_{11}^+M_{12}$, we have that $\tM_{11} \leq M_{22}$ and $\normf{\tM_{11}} \leq \norm{M_{22}}$ as well.  So $M(\norm{M_{22}})$ is full rank unless $\norm{M_{22}} = \normf{\tM_{11}}$.
\end{proof}

We extend the minmax analysis to constrained quadratic optimization problems.
\begin{proposition}[Minmax of constrained quadratic functions]
\label{prop:conminmaxalt}
Consider quadratic function $V(\cdot): \bbR^{m+n} \rightarrow \bbR$, compact constraint set $\bbW \eqbyd \{ w \mid w'w = 1\}$, and 
Lagrangian function $L(\cdot): \bbR^{m+n+1} \rightarrow \bbR$
\begin{align*}
V(u,w) &= (1/2) \begin{bmatrix} u \\ w \end{bmatrix}'
M(0)
\begin{bmatrix} u \\ w \end{bmatrix}
+  \begin{bmatrix} u \\ w \end{bmatrix}' d\\
L(u,w,\lambda) &= (1/2) \begin{bmatrix} u \\ w \end{bmatrix}'
{M(\lambda)}
\begin{bmatrix} u \\ w \end{bmatrix}  +
\begin{bmatrix} u \\ w \end{bmatrix}'  d
+  \lambda/2
\end{align*} 
We consider the two constrained optimization problems
\begin{alignat}{2}
&\min_u &\max_{w \in \bbW} &V(u,w)  \quad \text{robust control} \label{eq:conminmaxalt}\\
&\max_{w \in \bbW} &\min_u &V(u,w) \quad \text{worst-case feedforward control} \label{eq:conmaxminalt}
\end{alignat}
Assume $M(0) \geq 0$ and $M_{11} > 0$.
For $d \in \mathcal{R}(M(\lambda))$ denote stationary points by $(u^*(\lambda), w^*(\lambda))$ and evaluated Lagrangian function  
\begin{align*}
\begin{bmatrix}u^*(\lambda) \\ w^*(\lambda) \end{bmatrix} &\in -M^{\dagger}(\lambda) d + N(M(\lambda)) \\
L(\lambda) &= V(u^*,w^*) -(1/2) \lambda((w^*)'w^* -1) \\
&= -(1/2) d'M^{\dagger}(\lambda)d + \lambda/2 
\end{align*}
We  then have the following results. 
\begin{enumerate}
\item
The solution to problem \eqref{eq:conminmaxalt} exists for all $d \in \bbR^{m+n}$ and is given by
\begin{equation*}
u_r^0 = u^*(\lambda_r^0) \qquad w_r^0 = \ow^0 \cap \bbW
\end{equation*} 
where $\lambda^0_r$ denotes the solution to the following optimization, which exists for all $d \in \bbR^{m+n}$
\begin{equation}
 \lambda_r^0 = \arg \min_{\lambda \geq \norm{M_{22}}} L(\lambda)
 \label{eq:lamr}
\end{equation} 
and $\ow^0$ is all solutions to 
\begin{equation*}
M'_{12} u^*(\lambda^0_r) + (M_{22}- \lambda^0_fI) \ow^0 = -d_1
\end{equation*} 
The optimal cost is given by $V(u_r^0, w_r^0) = L(\lambda_r^0)$.

\item
The solution to problem \eqref{eq:conmaxminalt} exists for all $d \in \bbR^{m+n}$ and is given by
\begin{equation*}
u_f^0 = \uu^0 \qquad w_f^0 = w^*(\lambda^0_f) \cap \bbW
\end{equation*} 
where $\lambda^0_f$ denotes the solution to the following optimization, which exists for all $d \in \bbR^{m+n}$
\begin{equation}
 \lambda_f^0 = \arg \min_{\lambda \geq \normf{\tM_{11}}} L(\lambda)
 \label{eq:lamf}
\end{equation} 
and $\uu^0$ is all solutions to 
\begin{equation*}
M_{11} \uu^0 + M_{12} w^*(\lambda^0_f) = -d_2
\end{equation*} 
The optimal cost is given by $V(u_f^0, w_f^0) = L(\lambda_f^0)$.
\end{enumerate}
\end{proposition}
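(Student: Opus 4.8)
The plan is to reduce each of the two problems to a one-dimensional minimization of $L(\lambda)$ over a half-line in the dual variable $\lambda$, treating the two directions symmetrically and invoking \cref{prop:conquad} for the sphere constraint and \cref{prop:sdparalt} for the resulting quadratic minmax/maxmin. The only structural difference between the robust and feedforward problems is the order in which the sphere-constrained optimization and the unconstrained optimization are carried out, and this is precisely what produces the two different left endpoints $\norm{M_{22}}$ and $\normf{\tM_{11}}$ in \eqref{eq:lamr} and \eqref{eq:lamf}.

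For the robust problem \eqref{eq:conminmaxalt} I would fix $u$ and apply \cref{prop:conquad} to the inner sphere-constrained maximization of $V(u,\cdot)$, which is convex in $w$ since $M_{22}\succeq 0$ is a principal block of $M(0)\succeq 0$. Strong duality yields $\max_{w\in\bbW}V(u,w)=\min_\lambda\max_w L(u,w,\lambda)$, with the dual supported on $\lambda\geq\norm{M_{22}}$ (for $\lambda<\norm{M_{22}}$ the Hessian $M_{22}-\lambda I$ has a positive eigenvalue and $\max_w L=+\infty$). The two minimizations are over independent variables and therefore commute, so $\min_u\max_{w\in\bbW}V=\min_{\lambda\geq\norm{M_{22}}}\min_u\max_w L$. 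For each such $\lambda$, \cref{prop:sdparalt} evaluates the inner minmax as $L(\lambda)=-\half d'M^\dagger(\lambda)d+\lambda/2$, giving $\min_u\max_{w\in\bbW}V=\min_{\lambda\geq\norm{M_{22}}}L(\lambda)$, i.e.\ \eqref{eq:lamr}. The primal optimizers are then read off at $\lambda_r^0$: $u_r^0=u^*(\lambda_r^0)$ from the stationary point of \cref{prop:sdparalt}, and $w_r^0=\ow^0\cap\bbW$ from the $w$-block stationarity $M_{12}'u^*(\lambda_r^0)+(M_{22}-\lambda_r^0 I)\ow^0=-d_2$ as in \cref{prop:conquad}.

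For the feedforward problem \eqref{eq:conmaxminalt} I would reverse the order. Because $M_{11}\succ 0$, the inner minimization over $u$ is a strictly convex quadratic with $u^*(w)=-M_{11}^{-1}(M_{12}w+d_1)$; substituting gives $\bar V(w)=\min_u V(u,w)=\half w'\tM_{11}w+\tilde d'w+c$, which is convex in $w$ since $\tM_{11}=M_{22}-M_{12}'M_{11}^{-1}M_{12}\succeq 0$ is the Schur complement of the positive definite block $M_{11}$ in $M(0)\succeq 0$. Applying \cref{prop:conquad} to the sphere-constrained maximization of $\bar V$ gives $\max_{w\in\bbW}\bar V(w)=\min_\lambda\max_w[\bar V(w)-\half\lambda(w'w-1)]$, supported on $\lambda\geq\normf{\tM_{11}}$. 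Since $\bar V(w)-\half\lambda(w'w-1)=\min_u L(u,w,\lambda)$, the dual objective is exactly the unconstrained maxmin $\max_w\min_u L$, which \cref{prop:sdparalt} evaluates as the same $L(\lambda)$ on $\lambda\geq\normf{\tM_{11}}$. Hence $\max_{w\in\bbW}\min_u V=\min_{\lambda\geq\normf{\tM_{11}}}L(\lambda)$, i.e.\ \eqref{eq:lamf}, with optimizers $w_f^0=w^*(\lambda_f^0)\cap\bbW$ and $u_f^0=\uu^0$ from $M_{11}\uu^0+M_{12}w^*(\lambda_f^0)=-d_1$.

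The main obstacle is establishing existence of the outer minimizers $\lambda_r^0,\lambda_f^0$ for every $d$, which reduces to the behaviour of $L$ at the two endpoints of its half-line. Coercivity as $\lambda\to\infty$ is routine: there $M(\lambda)$ is invertible and $d'M^{-1}(\lambda)d\to d_1'M_{11}^{-1}d_1$, so the $\lambda/2$ term dominates and $L(\lambda)\to\infty$. The delicate point is the left endpoint $\lambda_{\mathrm{thr}}$, where $M(\lambda_{\mathrm{thr}})$ is singular: I must show $L$ is lower semicontinuous there, i.e.\ that $L(\lambda)\to+\infty$ (and not $-\infty$) as $\lambda\downarrow\lambda_{\mathrm{thr}}$ whenever $d\notin\mathcal R(M(\lambda_{\mathrm{thr}}))$, so that the minimizer stays in the region where \cref{prop:sdparalt} applies. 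I expect to resolve this by a first-order perturbation of the vanishing eigenvalue of $M(\lambda_{\mathrm{thr}})$ along its null vector $v$: since $M_{11}\succ 0$ forces a $u$-block $v_1=0$ to imply $v=0$, every null vector has nonzero $w$-block, so the perturbed eigenvalue moves strictly negative for $\lambda>\lambda_{\mathrm{thr}}$ and $-\half d'M^{-1}(\lambda)d$ diverges to $+\infty$ along the component $d'v\neq 0$. Combined with the range-membership characterization in item 6 of \cref{prop:conquad}, this yields existence for all $d$ via Weierstrass, after which items 1 and 2 follow from the stationary conditions of \cref{prop:sdparalt,prop:conquad}.
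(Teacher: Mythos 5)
Your proposal is correct in outline, but note that the paper contains no in-document proof of \cref{prop:conminmaxalt} to compare against: the appendix states that the proofs of \cref{prop:conlag}--\cref{prop:conminmaxalt} are found in the cited companion references \citep{rawlings:mannini:kuntz:2024,rawlings:mannini:kuntz:2024b}. Measured against the toolkit the paper does provide, you assemble it exactly as intended: on the robust side, pointwise strong duality in $w$ for fixed $u$ (\cref{prop:conquad} with $D=M_{22}\succeq 0$, effective dual domain $\lambda\ge\norm{M_{22}}$), commutation of the two infima, and evaluation of $\min_u\max_w L$ by \cref{prop:sdparalt}; on the feedforward side, eliminating $u$ first via $M_{11}\succ 0$ so that the Schur complement $\tM_{11}\succeq 0$ becomes the Hessian and \cref{prop:conquad} produces the lower endpoint $\normf{\tM_{11}}$, with $\min_u L = \bar V(w)-(1/2)\lambda(w'w-1)$ reconnecting to the unconstrained maxmin. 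You also silently repaired two typographical slips in the statement itself (the right-hand sides of the stationarity equations should be $-d_2$ in item 1 and $-d_1$ in item 2, and the $\lambda_f^0$ appearing inside item 1 should read $\lambda_r^0$), which is the correct reading. Two spots deserve tightening rather than repair. First, your eigenvalue-perturbation argument at the left endpoint should be stated for a possibly multi-dimensional null space: since $M_{11}\succ 0$ forces every $v\in\mathcal{N}(M(\lambda_{\mathrm{thr}}))$ to have $v_2\neq 0$ (indeed $v_1=-M_{11}^{-1}M_{12}v_2$ and $(\tM_{11}(\lambda_{\mathrm{thr}}))v_2=0$), the quadratic form $v'Ev=\norm{v_2}^2$ with $E=\mathrm{diag}(0,I)$ is positive definite on the null space, so \emph{all} crossing eigenvalues move strictly negative at first order and the divergence $L(\lambda)\to+\infty$ as $\lambda\downarrow\lambda_{\mathrm{thr}}$ for $d\notin\mathcal{R}(M(\lambda_{\mathrm{thr}}))$ survives multiplicity; combined with your (correct) coercivity limit $d'M^{-1}(\lambda)d\to d_1'M_{11}^{-1}d_1$, Weierstrass then gives existence of $\lambda_r^0,\lambda_f^0$ for all $d$. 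Second, "read off the primal optimizers at $\lambda_r^0$" needs the explicit weak-duality sandwich: from $\max_{w\in\bbW}V(u_r^0,w)\le\max_w L(u_r^0,w,\lambda_r^0)=L(\lambda_r^0)=\min_u\max_{w\in\bbW}V$, equality holds throughout, so $\lambda_r^0$ is dual optimal for the inner sphere problem at $u_r^0$, and nonemptiness of $\ow^0\cap\bbW$ then follows from item 6 of \cref{prop:conquad} (either $\norm{(M_{22}-\lambda_r^0 I)^{-1}(M_{12}'u_r^0+d_2)}=1$ or the affine stationary set meets the sphere); the symmetric argument identifies $w_f^0$. Neither point is a genuine gap — both are finishing steps of the route you chose.
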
 
We provide a necessary and sufficient condition for block matrix nonsingularity.

\begin{proposition}[Nonsingularity condition for block matrix]
\label{prop:nonsingular}
Let $A \succ 0$ and $D \preceq 0$. The matrix
\begin{equation*}
   M = \begin{bmatrix}
   A & B \\ B' & D  \end{bmatrix}
\end{equation*}
is invertible if and only if $\mathcal{N}(B) \cap \mathcal{N}(D) = \{0\}$.
\end{proposition}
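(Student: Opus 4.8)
The plan is to reduce the question to the Schur complement of $M$ with respect to its $(1,1)$ block, which is invertible by hypothesis. Since $A \succ 0$, the block factorization
\[
M = \begin{bmatrix} I & 0 \\ B'A^{-1} & I \end{bmatrix}\begin{bmatrix} A & 0 \\ 0 & S \end{bmatrix}\begin{bmatrix} I & A^{-1}B \\ 0 & I \end{bmatrix}, \qquad S \eqbyd D - B'A^{-1}B,
\]
holds identically, and the two unit-triangular factors are always invertible. Hence $M$ is invertible if and only if the Schur complement $S$ is invertible, and the whole problem reduces to analyzing $S$.

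Next I would exploit the sign structure. Because $A \succ 0$ gives $A^{-1} \succ 0$, we have $B'A^{-1}B \succeq 0$, and combined with $D \preceq 0$ this yields $S = D - B'A^{-1}B \preceq 0$. A symmetric negative semidefinite matrix is invertible exactly when it is negative definite, i.e. exactly when $\mathcal{N}(S) = \{0\}$. So the claim reduces to showing $\mathcal{N}(S) = \mathcal{N}(B) \cap \mathcal{N}(D)$.

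To identify $\mathcal{N}(S)$, I would use that for a symmetric semidefinite matrix $Sy = 0 \iff y'Sy = 0$. Writing $y'Sy = y'Dy - (By)'A^{-1}(By)$ as a sum of two nonpositive terms, it vanishes if and only if both terms vanish. Then $D \preceq 0$ gives $y'Dy = 0 \iff Dy = 0$, and $A^{-1} \succ 0$ gives $(By)'A^{-1}(By) = 0 \iff By = 0$. Therefore $\mathcal{N}(S) = \mathcal{N}(B) \cap \mathcal{N}(D)$, and chaining the equivalences $M\text{ invertible} \iff S\text{ invertible} \iff \mathcal{N}(S) = \{0\} \iff \mathcal{N}(B)\cap\mathcal{N}(D) = \{0\}$ finishes both directions at once.

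The only mildly delicate points are the two semidefinite facts that a quadratic form $y'Sy$ vanishing forces $y \in \mathcal{N}(S)$; both follow from the spectral theorem (diagonalize $S = -U\Sigma U'$ with $\Sigma \succeq 0$). I expect this to be the main thing to state carefully rather than a genuine obstacle, and I would note explicitly that the symmetry of $M$, inherited from $A = A'$ and $D = D'$, is what makes $S$ symmetric and licenses this semidefinite reasoning; everything else is routine bookkeeping.
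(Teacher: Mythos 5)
Your proof is correct and takes essentially the same route as the paper's: both reduce invertibility of $M$ to the Schur complement $S = D - B'A^{-1}B$ (you via the explicit block triangular factorization, the paper via the partitioned determinant formula) and then exploit the sign structure through the quadratic form $v'Sv = v'Dv - (Bv)'A^{-1}(Bv)$, forcing both nonpositive terms to vanish. Your single identity $\mathcal{N}(S) = \mathcal{N}(B)\cap\mathcal{N}(D)$ merely packages the paper's separate sufficiency and necessity arguments into one step; the underlying computation is identical.
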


\begin{proof}
Using the partitioned matrix determinant formula, since $A \succ 0$,  the matrix $M$ is invertible if and only if its Schur complement $\tM \eqbyd D - B'A^{-1}B$ is nonsingular.
To establish sufficiency, assume $\mathcal{N}(B) \cap \mathcal{N}(D) = \{0\}$ and let $v$ satisfy $\tM v = 0$. Then
\begin{equation*}
v'\tM v = 0 = v'Dv - (Bv)'A^{-1}(Bv)
\end{equation*}
Since $D \preceq 0$ and $A^{-1} \succ 0$, we have that $v'Dv \leq 0$ and $(Bv)'A^{-1}(Bv) \geq 0$, and since they are equal to each other, both are zero: $v'Dv = (Bv)'A^{-1}(Bv) = 0$.  Since $-D$ is positive semidefinite and $A^{-1}$ is positive definite, they both have square roots, which implies that $\sqrt{-D}v = 0$ and $A^{-1/2}Bv = 0$. Multiplying by $\sqrt{-D}$ and $A^{1/2}$, respectively, then gives $-Dv=0$ and $Bv=0$. Therefore  $v \in \mathcal{N}(B) \cap \mathcal{N}(D) = \{0\}$. Hence $v = 0$, $\tM$ is nonsingular, and $M$ is nonsingular.

To establish necessity, let $\mathcal{N}(B) \cap \mathcal{N}(D) \neq \{0\}$ and pick nonzero $v$ in the intersection. Then $Bv = 0$ and $Dv = 0$, so $\tM v = Dv - B'A^{-1}Bv = 0$ for nonzero $v$, so $\tM$ and therefore $M$ are singular.
\end{proof}

\begin{proposition}[Invertibility under range inclusion]
\label{prop:range-inclusion-invertible}
Let ${\Pi}\succeq 0$, $R\succ 0$, and ${\lambda}>0$. Assume $G'{\Pi}G-{\lambda}I\preceq 0$. If $\mathcal{R}(G)\subseteq \mathcal{R}(B)$ (equivalently, $\mathcal{N}(B')\subseteq \mathcal{N}(G')$), then the block matrix
\[
   M \eqbyd
   \begin{bmatrix}
      B'{\Pi}B + R & B'{\Pi}G \\
      G' \Pi B & G'{\Pi}G - {\lambda} I
   \end{bmatrix}
\]
is nonsingular.
\end{proposition}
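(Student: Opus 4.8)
The plan is to reduce this directly to the block-matrix criterion already established in \cref{prop:nonsingular}. First I would check that $M$ fits that proposition's hypotheses: the $(1,1)$ block $B'\Pi B + R$ is positive definite because $R \succ 0$ and $B'\Pi B \succeq 0$ (the latter from $\Pi \succeq 0$), while the $(2,2)$ block $G'\Pi G - \lambda I$ is negative semidefinite by assumption. Thus $M$ is exactly of the form treated in \cref{prop:nonsingular} with the identifications $A = B'\Pi B + R$, off-diagonal block $B'\Pi G$, and $D = G'\Pi G - \lambda I$. By that proposition, $M$ is nonsingular if and only if
\[
\mathcal{N}(B'\Pi G) \cap \mathcal{N}(G'\Pi G - \lambda I) = \{0\},
\]
so the entire task collapses to verifying that this intersection is trivial.

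To do this, I would take an arbitrary $v$ in the intersection and show $v = 0$. The condition $B'\Pi G v = 0$ says precisely that $\Pi G v \in \mathcal{N}(B')$. Here the range-inclusion hypothesis does the essential work: since $\mathcal{R}(G) \subseteq \mathcal{R}(B)$ is equivalent to $\mathcal{N}(B') \subseteq \mathcal{N}(G')$, membership $\Pi G v \in \mathcal{N}(B')$ upgrades to $\Pi G v \in \mathcal{N}(G')$, i.e.\ $G'\Pi G v = 0$. On the other hand, the second condition $(G'\Pi G - \lambda I)v = 0$ gives $G'\Pi G v = \lambda v$. Combining the two yields $\lambda v = 0$, and since $\lambda > 0$ we conclude $v = 0$. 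Hence the intersection is $\{0\}$ and $M$ is nonsingular.

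The step I expect to be the crux is the transfer of the nullspace condition from $B'\Pi G$ to $G'\Pi G$ via the equivalent form $\mathcal{N}(B') \subseteq \mathcal{N}(G')$ of the range inclusion. Once that observation is in hand, the positivity of $\lambda$ finishes the argument immediately; the remaining bookkeeping (definiteness of the diagonal blocks, invoking \cref{prop:nonsingular}) is routine. It is worth noting that the proof uses $\lambda > 0$ only to rule out $v \neq 0$ in the eigenvalue relation, and uses $R \succ 0$ only to guarantee the strict positivity of the leading block required by \cref{prop:nonsingular}; the semidefiniteness $G'\Pi G - \lambda I \preceq 0$ is what places $M$ in the scope of that proposition in the first place.
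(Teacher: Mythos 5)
Your proof is correct and follows essentially the same route as the paper's: both reduce nonsingularity of $M$ to the Schur complement via the positive definite $(1,1)$ block, extract the conditions $B'\Pi G\,v=0$ and $(G'\Pi G-\lambda I)v=0$, and then use $\mathcal{N}(B')\subseteq\mathcal{N}(G')$ together with $\lambda>0$ to force $v=0$. The only difference is cosmetic: you invoke \cref{prop:nonsingular} as a black box, whereas the paper re-runs that Schur-complement argument inline; your modularization is arguably the cleaner presentation.
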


\begin{proof}
Write $A\eqbyd B'{\Pi}B+R\succ 0$, $C\eqbyd B'{\Pi}G$, and $D\eqbyd G'{\Pi}G-{\lambda}I\preceq 0$, so that
$M=\begin{bmatrix} A & C \\ C' & D \end{bmatrix}$. Using the partitioned matrix determinant formula, since $A\succ 0$, the matrix $M$ is invertible if and only if its Schur complement $\tM \eqbyd D - C' A^{-1} C$ is nonsingular.

To establish sufficiency, assume $\mathcal{R}(G)\subseteq \mathcal{R}(B)$ and let $v$ satisfy $\tM v = 0$. Then
\[
v'\tM v \;=\; 0 \;=\; v'Dv - (Cv)'A^{-1}(Cv).
\]
Since $D\preceq 0$ and $A^{-1}\succ 0$, we have $v'Dv\le 0$ and $(Cv)'A^{-1}(Cv)\ge 0$, and because they are equal, both are zero:
$v'Dv=(Cv)'A^{-1}(Cv)=0$. As $-D\succeq 0$ and $A^{-1}\succ 0$ admit square roots, this implies $\sqrt{-D}\,v=0$ and $A^{-1/2}Cv=0$, hence $Dv=0$ and $Cv=0$.

From $Cv=B'{\Pi}Gv=0$ we obtain ${\Pi}Gv\in\mathcal{N}(B')\subseteq\mathcal{N}(G')$, so $G'\Pi Gv=0$. Together with $Dv=(G'{\Pi}G-{\lambda}I)v=0$ and ${\lambda}>0$, it follows that $v=0$. Therefore $\tM$ is nonsingular, and hence $M$ is nonsingular.
\end{proof}

\begin{lemma}[Riccati equalities]
\label{receq}
The equality 
\begin{align*}
\Pi(\lambda) &= Q+A'\Pi A- A' \Pi \begin{bmatrix} B & G \end{bmatrix}  \begin{bmatrix} B'\Pi B + R & B'\Pi G \\ (B'\Pi G)' & G'\Pi G - \lambda I  \end{bmatrix}^{\dagger}
    \begin{bmatrix} B' \\ G'\end{bmatrix}\Pi A
\end{align*}
    can be rewritten as
\begin{equation}
            \Pi = \bar{Q} + \bar{A}' \Pi\bar{A} - \bar{A}'\Pi G(G'\Pi G-\lambda I)^{\dagger}G'\Pi\bar{A} \label{xlrec}
        \end{equation}
    where $\bar{A} = A+BK$ and $\bar{Q} = Q + K'RK$ and $K$ satisfies
    \begin{equation}
  \begin{bmatrix}
        B'\Pi B+R & B'\Pi G \\
        G'\Pi B & G'\Pi G-\lambda I
    \end{bmatrix}\begin{bmatrix}
K \\ J
\end{bmatrix} =-\begin{bmatrix}
B'\Pi A \\ G'\Pi A
\end{bmatrix} \label{JK}
\end{equation}
\end{lemma}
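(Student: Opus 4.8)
The statement is a purely algebraic matrix identity, so my plan is to verify it directly: eliminate the control block of the coefficient matrix using the gain equations \eqref{JK} and show that the unreduced form (the first display) and the reduced form \eqref{xlrec} both collapse to the common expression $Q+A'\Pi A-A'\Pi BK-A'\Pi GJ$. Before starting I would record the facts I can lean on. Since $R\succ0$ and $\Pi\succeq0$, the leading block $B'\Pi B+R\succ0$; and under \cref{asst2} together with the domain condition $G'\Pi G-\lambda I\preceq0$, \cref{prop:range-inclusion-invertible} gives that the block matrix $M\eqbyd\begin{bmatrix}B'\Pi B+R & B'\Pi G\\ G'\Pi B & G'\Pi G-\lambda I\end{bmatrix}$ is nonsingular, so \eqref{JK} is solvable for $(K,J)$. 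Abbreviating $d\eqbyd\begin{bmatrix}B'\Pi A\\ G'\Pi A\end{bmatrix}$, the correction term of the unreduced form is exactly the quadratic form $d'M^{\dagger}d$ (using $\Pi=\Pi'$).

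The first reduction collapses that quadratic form. For any solution of \eqref{JK} we have $M\begin{bmatrix}K\\ J\end{bmatrix}=d$, so by $M=M'$ and the Moore--Penrose identity $MM^{\dagger}M=M$ we get $d'M^{\dagger}d=\begin{bmatrix}K\\ J\end{bmatrix}'MM^{\dagger}M\begin{bmatrix}K\\ J\end{bmatrix}=\begin{bmatrix}K\\ J\end{bmatrix}'d=A'\Pi BK+A'\Pi GJ$, and this value is independent of which solution is chosen. Hence the unreduced form equals $Q+A'\Pi A-A'\Pi BK-A'\Pi GJ$, which already disposes of the $M^{\dagger}$ on the left-hand side.

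For \eqref{xlrec} I would expand $\bar Q+\bar A'\Pi\bar A$ with $\bar A=A+BK$ and $\bar Q=Q+K'RK$, regrouping the quadratic-in-$K$ terms as $K'(B'\Pi B+R)K$ and rewriting the cross terms through the control block-row of \eqref{JK}. The decisive step is the disturbance correction $\bar A'\Pi G(G'\Pi G-\lambda I)^{\dagger}G'\Pi\bar A$: reading the $G$-block row of \eqref{JK} (with the gain taken in the optimal-control sign, consistent with the minus sign in \eqref{Nstage-uk}) shows $G'\Pi\bar A=-(G'\Pi G-\lambda I)J$, so $G'\Pi\bar A\in\mathcal{R}(G'\Pi G-\lambda I)$. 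The Moore--Penrose identity then collapses the correction to $J'(G'\Pi G-\lambda I)J$, after which substituting back and invoking both block-rows of \eqref{JK} once more reduces \eqref{xlrec} to the same expression $Q+A'\Pi A-A'\Pi BK-A'\Pi GJ$, establishing the identity.

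The main obstacle is the singular middle block: $G'\Pi G-\lambda I$ is in general only negative semidefinite and is genuinely rank-deficient on the boundary $\lambda=\norm{G'\Pi G}$, so it cannot be inverted and a naive Schur-complement manipulation does not apply. The whole argument turns on the range membership $G'\Pi\bar A\in\mathcal{R}(G'\Pi G-\lambda I)$, which is precisely what the $G$-block row of \eqref{JK} supplies and which is what allows the pseudoinverse to act as a genuine inverse inside the product $(\,\cdot\,)'(G'\Pi G-\lambda I)^{\dagger}(\,\cdot\,)$. A secondary but essential point is sign bookkeeping: the collapse works only when the gain enters $\bar A$ so that $G'\Pi\bar A$ lands in this range, so I would carry the gain sign explicitly rather than read it off the formal layout of \eqref{JK}. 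The remaining manipulations are routine symmetric-matrix algebra.
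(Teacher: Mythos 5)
Your proposal is correct and follows essentially the same route as the paper's proof: eliminate the correction term through \eqref{JK} via the Moore--Penrose identity (your $MM^{\dagger}M=M$ is the paper's $M^{\dagger}MM^{\dagger}=M^{\dagger}$ step in equivalent form), then collapse the disturbance correction in \eqref{xlrec} using the range membership $G'\Pi\bar{A}\in\mathcal{R}(G'\Pi G-\lambda I)$ supplied by the $G$-block row of \eqref{JK}, which the paper implements by substituting the general solution for $J$ and showing the nullspace terms vanish in the quadratic forms. Your insistence on explicit sign bookkeeping is warranted rather than pedantic: the paper's own proof silently switches from the $+$ right-hand side displayed in \eqref{JK} to the $-$ convention (the one consistent with \eqref{Nstage-uk} and with $\bar{A}=A+BK$), so carrying the gain sign explicitly, as you propose, is the consistent reading.
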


\begin{proof}
From $M^\dagger M M^\dagger=M^ \dagger$ we have
 \begin{align}
         \Pi(\lambda) &=Q+A'\Pi A- A' \Pi \begin{bmatrix} B & G \end{bmatrix}
     M(\lambda)^{\dagger}
    \begin{bmatrix} B' \\ G'\end{bmatrix}\Pi A  \label{recjk} \\ &= Q+A'\Pi A- A' \Pi \begin{bmatrix} B & G \end{bmatrix}
     M(\lambda)^{\dagger}M(\lambda)M(\lambda)^{\dagger}
    \begin{bmatrix} B' \\ G'\end{bmatrix}\Pi A \nonumber
    \end{align}
where
\[
M(\lambda) = \begin{bmatrix} B'\Pi B + R & B'\Pi G \\ (B'\Pi G)' & G'\Pi G - \lambda I  \end{bmatrix}
\]
Define
\[
\begin{bmatrix}
        B'\Pi B+R & B'\Pi G \\
        G'\Pi B & G'\Pi G-\lambda I
    \end{bmatrix}\begin{bmatrix}
K \\ J
\end{bmatrix} =-\begin{bmatrix}
B'\Pi A \\ G'\Pi A
\end{bmatrix}
\]
or equivalently, with $b = \begin{bmatrix} B'\Pi A \\ G'\Pi A \end{bmatrix}$
    \begin{equation}
        \begin{bmatrix}
K \\ J
\end{bmatrix} =-M(\lambda)^{\dagger}b+\mathcal{N}(M(\lambda)) \label{JK2}
    \end{equation}
For any $v \in \mathcal{N}(M(\lambda))$, we have $M(\lambda)v = 0$, which gives $v'M(\lambda)v = 0$ and $(M(\lambda)^{\dagger}b)'M(\lambda)v = b'M(\lambda)^{\dagger}M(\lambda)v = 0$. Therefore, when substituting $\begin{bmatrix} K \\ J \end{bmatrix} = -M(\lambda)^{\dagger}b + v$ into the quadratic form $\begin{bmatrix} K' & J' \end{bmatrix} M(\lambda) \begin{bmatrix} K \\ J\end{bmatrix}$, all terms involving $v$ vanish. Thus, the following expression
\begin{equation}
        \Pi(\lambda) =Q+A'\Pi A- \begin{bmatrix} K' & J' \end{bmatrix}
     M(\lambda)
    \begin{bmatrix} K \\ J\end{bmatrix} \label{kj2}
\end{equation}
is equivalent to \eqref{recjk}. Expanding \eqref{kj2}
\begin{align*}
    \Pi(\lambda) = & Q+A'\Pi A-K'B'\Pi BK -K'RK-K'B'\Pi GJ-J'G'\Pi BK-J'(G'\Pi G-\lambda I)J
\end{align*}
Consider
\[
B'\Pi GJ = -B' \Pi A - (B'\Pi B + R)K
\]
and
\[
J = -(G' \Pi G-\lambda I)^{\dagger}G' \Pi (A+BK)+\mathcal{N}(G' \Pi G-\lambda I)
\]
For any $q \in \mathcal{N}(G' \Pi G-\lambda I)$, we have $(G' \Pi G-\lambda I)q = 0$, giving $q'(G' \Pi G-\lambda I)q = 0$ and $((G' \Pi G-\lambda I)^{\dagger}G' \Pi (A+BK))'(G' \Pi G-\lambda I)q = 0$. Therefore, all terms involving $q$ vanish in the quadratic form $J'(G'\Pi G-\lambda I)J$.
Thus, substituting $B'\Pi GJ$ and $J$ in $\Pi(\lambda) = Q+A'\Pi A-K'B'\Pi BK -K'RK-K'B'\Pi GJ-J'G'\Pi BK-J'(G'\Pi G-\lambda I)J$ we obtain
\begin{align*}
\Pi(\lambda) =& Q+K'RK+(A+BK)'\Pi(A+BK)\\&-(A+BK)'\Pi G(G'\Pi G-\lambda I)^{\dagger}G' \Pi(A+BK)
\end{align*}
which is \eqref{xlrec} with $\bar{A} = A+BK$ and $\bar{Q} = Q + K'RK$.
\end{proof}

\subsection{\textbf{Proofs of Main Results}}
\label{app:proofs}

\begin{proof}[Proof of \cref{prop:2stage-StDAR}]
We proceed by backward dynamic programming. At each stage, we establish convexity and coercivity in the multipliers, concavity in the disturbance, and apply the minimax theorem (\cref{th:minimax}) together with pointwise strong duality (\cref{prop:conquad}) to interchange minimization and maximization. Throughout, \cref{receq} yields $\Pi_k(\cdot)\succeq0$ on its feasible domain. Define
\[
M_0(\lambda_0,\lambda_1)\eqbyd
\begin{bmatrix}
B'\Pi_1(\lambda_1)B+R & B'\Pi_1(\lambda_1)G\\
(B'\Pi_1(\lambda_1)G)' & G'\Pi_1(\lambda_1)G-\lambda_0 I
\end{bmatrix}
\]
\[
M_1(\lambda_1)\eqbyd
\begin{bmatrix}
B'P_fB+R & B'P_fG\\
(B'P_fG)' & G'P_fG-\lambda_1 I
\end{bmatrix}
\]

\textbf{First step: from $k=2$ to $k=1$.} \newline
We first show that the inequality constraint $|w_1|^2 \leq \alpha_1$ can be replaced by the equality constraint $|w_1|^2 = \alpha_1$. For fixed $(x_0,u_0,w_0,x_1,u_1)$, with $x_2 = Ax_1+Bu_1+Gw_1$, the numerator $V(x_0,\useq,\wseq)$ contains the term $(1/2)w_1'G'P_fGw_1 + w_1'G'P_f(Ax_1+Bu_1)$ in $w_1$, with the remaining terms independent of $w_1$. From $P_f \succeq 0$ we have $G'P_fG \succeq 0$, and \cref{asst3} gives $G'P_fG \neq 0$, so this term is convex but not constant in $w_1$. By \citet[Corollary~32.3.2]{rockafellar:1970}, a convex function on a compact convex set attains its maximum at an extreme point, so the maximum of $V$ over $\bbW_1 = \{w_1: |w_1|^2 \leq \alpha_1\}$ occurs on the boundary $|w_1|^2 = \alpha_1$. The inequality constraint $|w_1|^2 \leq \alpha_1$ can therefore be replaced by the equality constraint $|w_1|^2 = \alpha_1$ in the nested minmax of \eqref{2stagedp-w}.

With $|w_1|^2 = \alpha_1$ established, the problem becomes
\begin{align*}
&\min_{u_0}\max_{w_0 \in \bbW_0}\Bigg[\frac{1}{\overline{\alpha}}\Big(\ell(x_0,u_0) \\
&\quad +  \min_{u_1}\max_{w_1 \in \bbW_1}\big(\ell(x_1,u_1) + \ell_f(x_2) \big)\Big)\Bigg]
\end{align*}

Let
$\Lambda_1=\{\lambda_1:\lambda_1\ge\norm{G'P_fG}\}$, $h_1\eqbyd Ax_1+Bu_1$, and form the Lagrangian function
\[
L_1(u_1,w_1,\lambda_1)\eqbyd \ell(x_1,u_1)+\ell_f(h_1+Gw_1)-(\lambda_1/2)(\norm{w_1}^2-\alpha_1)
\]
By \cref{prop:conquad}, for every fixed $x_1$ and $u_1$, we have
\[
\max_{|w_1|^2=\alpha_1}\big[\ell(x_1,u_1)+\ell_f(h_1+Gw_1)\big] 
= \min_{\lambda_1\in\Lambda_1} \max_{w_1} L_1(u_1,w_1,\lambda_1)
\]
Hence
\[
\min_{u_1}\max_{w_1\in\bbW_1}\big[\ell(x_1,u_1)+\ell_f(x_2)\big] 
= \min_{u_1}\min_{\lambda_1\in\Lambda_1}\max_{w_1} L_1(u_1,w_1,\lambda_1)
\]
Since the two minimizations are over independent variables, they commute, and we obtain
\[
\min_{u_1}\min_{\lambda_1\in\Lambda_1}\max_{w_1} L_1 
= \min_{\lambda_1\in\Lambda_1}\min_{u_1}\max_{w_1} L_1
\]

Note that $M_1(\lambda_1)$ is invertible for all $\lambda_1\in\Lambda_1$ by \cref{prop:range-inclusion-invertible} and Assumptions 2-3. For any fixed $\lambda_1\in\Lambda_1$, \cref{prop:sdparalt} gives
\[
\min_{u_1}\max_{w_1} L_1(u_1,w_1,\lambda_1)
=\frac{1}{2}\,x_1'\Pi_1(\lambda_1)x_1+\frac{\alpha_1}{2}\lambda_1
\]
where 
\[
\Pi_1(\lambda_1)\eqbyd
Q+A'P_fA
- A'P_f\!\begin{bmatrix}B & G\end{bmatrix}
M_1(\lambda_1)^{-1}
\begin{bmatrix}B'\\ G'\end{bmatrix}\!P_f A
\]
From \cref{receq}, $\Pi_1(\lambda_1)$ can be rewritten as
$$\Pi_{1}(\lambda_1) = \bar{Q}_{1} + \bar{A}_{1}'P_f\bar{A}_{1} - \bar{A}_{1} P_f G(G'P_f G-\lambda_1 I)^{\dagger}G'P_f \bar{A}_{1}$$
where $\bar{A}_{1} = A+BK_{1}$ and $\bar{Q}_{1} = Q + K_{1}'RK_{1}$ and $K_{1}$ satisfies
    \begin{equation*}
  \begin{bmatrix}
        B'P_f B+R & B'P_f G \\
        G'P_f B & G'P_f G-\lambda_1 I
    \end{bmatrix}\begin{bmatrix}
K_{1} \\ J_{1}
\end{bmatrix} =-\begin{bmatrix}
B'P_f A \\ G'P_f A
\end{bmatrix}
\end{equation*}
From $Q\succeq0$ and $R\succ0$, we have $\bar{Q}_{1}\succeq0$. From $G'P_f G-\lambda_1 I \preceq0$, we have $\bar{A}_{1} P_f G(G'P_f G-\lambda_1 I)^{\dagger}G'P_f \bar{A}_{1} \preceq 0$. From $\bar{Q}_{1}\succeq0$, $G'P_f G-\lambda_1 I \preceq0$, $\bar{A}_{1} P_f G(G'P_f G-\lambda_1 I)^{\dagger}G'P_f \bar{A}_{1} \preceq 0$, and $P_f \succeq 0$, we have $\Pi_{1}(\lambda_1) \succeq0$ for $\lambda_1 \in \Lambda_1$.


Define $\phi_1(\lambda_1,x_1)\eqbyd \min_{u_1}\max_{w_1} L_1(u_1,w_1,\lambda_1)$. We establish joint convexity in $(\lambda_1,x_1)$ and coercivity in $\lambda_1$.

\emph{Convexity and coercivity of $\phi_1(\lambda_1,x_1)$.} 
For fixed $w_1$, the function $(u_1,\lambda_1,x_1) \mapsto L_1(u_1,w_1,\lambda_1)$ is convex: it is quadratic in $(x_1,u_1)$ with $Q\succeq0$, $R\succ0$, $P_f\succeq0$, and affine in $\lambda_1$. Define
\[
g_1(u_1,\lambda_1,x_1)\eqbyd \max_{w_1} L_1(u_1,w_1,\lambda_1)
\]
For $\lambda_1\in\Lambda_1$, the domain condition $G'P_fG - \lambda_1 I \preceq 0$ ensures $L_1$ is concave in $w_1$, hence the supremum over $w_1$ exists. Since $g_1$ is the pointwise supremum of convex functions in $(u_1,\lambda_1,x_1)$, it is jointly convex by \citet[§3.2.3]{boyd:vandenberghe:2004} (see also \citet[Theorem 5.5]{rockafellar:1970}). The partial minimization
\[
\phi_1(\lambda_1,x_1)\eqbyd \min_{u_1} g_1(u_1,\lambda_1,x_1)
\]
preserves joint convexity in $(\lambda_1,x_1)$ by \citet[§3.2.5]{boyd:vandenberghe:2004} (see also \citet[Theorem 5.3]{rockafellar:1970}).

For coercivity, note that $\phi_1(\lambda_1,x_1)=(1/2)x_1'\Pi_1(\lambda_1)x_1+(\alpha_1/2)\lambda_1$. Since $\Pi_1(\lambda_1)\succeq0$ and $\alpha_1>0$, we have $\phi_1(\lambda_1,x_1) \geq (\alpha_1/2)\lambda_1$, hence $\phi_1(\lambda_1,x_1) \to \infty$ as $\lambda_1 \to \infty$ on $\Lambda_1 = [\norm{G'P_fG},\infty)$.

\textbf{Second step: from $k=1$ to $k=0$.} \newline
With $|w_1|^2 = \alpha_1$ established from the first step, the remaining optimization is
\[
\min_{u_0}\max_{w_0 \in \bbW_0}\left[\frac{1}{\overline{\alpha}}\Big(\ell(x_0,u_0)+\min_{\lambda_1}\phi_1(\lambda_1,x_1)\Big)\right]
\]
where $x_1 = Ax_0+Bu_0+Gw_0$ and $\bbW_0 = \{w_0 \in \bbR^q : \norm{w_0}^2 \leq \alpha_0\}$. We similarly show that the inequality constraint $|w_0|^2 \leq \alpha_0$ can be replaced by the equality constraint $|w_0|^2 = \alpha_0$. For fixed $(x_0, u_0, \lambda_1)$ with $\lambda_1 \in \Lambda_1$, the inner cost $\ell(x_0,u_0) + \phi_1(\lambda_1,x_1)$ contains the term $(1/2)w_0'G'\Pi_1(\lambda_1)Gw_0 + w_0'G'\Pi_1(\lambda_1)(Ax_0+Bu_0)$ in $w_0$, with the remaining terms independent of $w_0$. Since $\Pi_1(\lambda_1) \succeq 0$ from the first step, we have $G'\Pi_1(\lambda_1)G \succeq 0$, so this term is convex in $w_0$. By \citet[Corollary~32.3.2]{rockafellar:1970}, the maximum over $\bbW_0$ occurs on the boundary $|w_0|^2 = \alpha_0$, and the inequality constraint can be replaced by the equality constraint. The same conclusion holds after the inner minimization over $\lambda_1$: by partial minimization \citep[§3.2.5]{boyd:vandenberghe:2004}, $\inf_{\lambda_1 \in \Lambda_1} \phi_1(\lambda_1, x_1)$ is convex in $x_1$, and composition with the affine mapping $x_1 = Ax_0 + Bu_0 + Gw_0$ preserves convexity in $w_0$.
Form the Lagrangian function
\begin{align*}
L_0(u_0,w_0,\lambda_0,\lambda_1) &\eqbyd \ell(x_0,u_0)
+\frac12 x_1'\Pi_1(\lambda_1)x_1 +\frac{\alpha_1}{2}\lambda_1
-\frac{\lambda_0}{2}(\norm{w_0}^2-\alpha_0)
\end{align*}
By \cref{prop:conlag}, for every fixed $x_0$, $u_0$, and $\lambda_1$, we have
\[
\max_{w_0\in\bbW_0}\big[\ell(x_0,u_0)+\phi_1(\lambda_1,x_1)\big] 
= \max_{w_0} \min_{\lambda_0} L_0(u_0,w_0,\lambda_0,\lambda_1)
\]
Hence
\begin{align*}
&\min_{u_0}\max_{w_0\in\bbW_0}\min_{\lambda_1}\big[\ell(x_0,u_0)+\phi_1(\lambda_1,x_1)\big] = \min_{u_0} \max_{w_0} \min_{\lambda_0,\lambda_1} L_0(u_0,w_0,\lambda_0,\lambda_1)
\end{align*}

We justify the interchange of minimization ($\min_{\lambda_0,\lambda_1}$) and maximization ($\max_{w_0}$). The feasible set for $(\lambda_0,\lambda_1)$ is
\[
\Lambda_2 \eqbyd \{(\lambda_0,\lambda_1): \lambda_1 \geq \norm{G'P_fG}, \lambda_0 \geq \norm{G'\Pi_1(\lambda_1)G}\}
\]
which is not a product set since the lower bound on $\lambda_0$ depends on $\lambda_1$. We show $\Lambda_2$ is convex. From the closed form $\phi_1(\lambda_1, x_1) = (1/2)x_1'\Pi_1(\lambda_1)x_1 + (\alpha_1/2)\lambda_1$ and joint convexity of $\phi_1$ in $(\lambda_1, x_1)$ established above, the map $(\lambda_1, x_1) \mapsto x_1'\Pi_1(\lambda_1)x_1$ is jointly convex on $\Lambda_1 \times \bbR^n$ (subtracting the term affine in $\lambda_1$ preserves convexity). For each $y \in \bbR^q$, restricting to $x_1 = Gy$ gives $\lambda_1 \mapsto y'G'\Pi_1(\lambda_1)Gy$ convex on $\Lambda_1$. Since $\Pi_1(\lambda_1) \succeq 0$, we have $G'\Pi_1(\lambda_1)G \succeq 0$ and the induced 2-norm of a positive semidefinite matrix equals its largest eigenvalue, giving
\[
\norm{G'\Pi_1(\lambda_1)G} = \sup_{|y|=1} y'G'\Pi_1(\lambda_1)Gy
\]
The function $\lambda_1 \mapsto \norm{G'\Pi_1(\lambda_1)G}$ is therefore convex on $\Lambda_1$ as the pointwise supremum of convex functions \citep[§3.2.3]{boyd:vandenberghe:2004}. The set $\Lambda_2$ is the intersection of $\Lambda_1 \times \bbR$ and $\{(\lambda_0, \lambda_1) : \lambda_0 \geq \norm{G'\Pi_1(\lambda_1)G}\}$. The first set is convex because $\Lambda_1 = [\norm{G'P_fG}, \infty)$ is convex; the second is convex by \citet[§3.1.7]{boyd:vandenberghe:2004} since $\norm{G'\Pi_1(\lambda_1)G}$ is convex in $\lambda_1$. Hence $\Lambda_2$ is convex as the intersection of two convex sets. We proceed by first eliminating $\lambda_0$.

\emph{Step 2a: Minimax interchange between $w_0$ and $\lambda_1$.}
Define the reduced function
\[
\psi(u_0, w_0, \lambda_1) \eqbyd \inf_{\lambda_0 \geq \norm{G'\Pi_1(\lambda_1)G}} L_0(u_0,w_0,\lambda_0,\lambda_1)
\]
Since $L_0$ is affine in $\lambda_0$ with coefficient $-\frac{1}{2}(\norm{w_0}^2 - \alpha_0)$, the function $\psi$ is extended-valued: $\psi(u_0,w_0,\lambda_1) = -\infty$ for $\norm{w_0}^2 > \alpha_0$, and $\psi$ is finite for $\norm{w_0}^2 \leq \alpha_0$ with the infimum attained at $\lambda_0 = \norm{G'\Pi_1(\lambda_1)G}$ for $\norm{w_0}^2 < \alpha_0$ and at any $\lambda_0 \geq \norm{G'\Pi_1(\lambda_1)G}$ for $\norm{w_0}^2 = \alpha_0$. Therefore
\begin{align*}
\max_{w_0} \min_{\lambda_0,\lambda_1} L_0 &= \max_{w_0} \min_{\lambda_1 \in \Lambda_1} \psi(u_0,w_0,\lambda_1) \\
&= \max_{\norm{w_0}^2 \leq \alpha_0} \min_{\lambda_1 \in \Lambda_1} \psi(u_0,w_0,\lambda_1)
\end{align*}
The function $\psi$ is concave in $w_0$ since it is the pointwise infimum of functions concave in $w_0$ by \citet[§3.2.3]{boyd:vandenberghe:2004}. For fixed $(u_0, w_0)$, since $L_0 = \ell(x_0,u_0) + \phi_1(\lambda_1,x_1) - (\lambda_0/2)(\norm{w_0}^2-\alpha_0)$, $\phi_1(\lambda_1,x_1)$ is convex in $\lambda_1$ from the first step, and the $\lambda_0$ term is independent of $\lambda_1$, the function $L_0$ is convex in $\lambda_1$. The partial minimization over $\lambda_0$ preserves convexity in $\lambda_1$ by \citet[§3.2.5]{boyd:vandenberghe:2004}, hence $\psi$ is convex in $\lambda_1$.

We verify the hypotheses of \cref{th:minimax}. The effective domain in $w_0$ is $\{w_0: \norm{w_0}^2 \leq \alpha_0\}$, and we have established that the maximum occurs on the boundary $\norm{w_0}^2 = \alpha_0$, which is compact. The coercivity $\psi(u_0,w_0,\lambda_1) \geq (\alpha_1/2)\lambda_1$ (since $Q \succeq 0$, $R \succ 0$, and $\Pi_1(\lambda_1) \succeq 0$) allows restriction of $\lambda_1$ to a compact interval $[\norm{G'P_fG}, \bar{\lambda}_1]$ containing all minimizers. The function $\psi$ is continuous on this domain. Therefore \cref{th:minimax} yields
\[
\max_{w_0} \min_{\lambda_1 \in \Lambda_1} \psi(u_0,w_0,\lambda_1) = \min_{\lambda_1 \in \Lambda_1} \max_{w_0} \psi(u_0,w_0,\lambda_1)
\]
Translating back, we obtain
\[
\max_{w_0} \min_{\lambda_0,\lambda_1} L_0 = \min_{\lambda_1} \max_{w_0} \min_{\lambda_0 \geq \norm{G'\Pi_1(\lambda_1)G}} L_0
\]

\emph{Step 2b: Strong duality between $(w_0, \lambda_0)$ for fixed $\lambda_1$.}
For each fixed $\lambda_1 \in \Lambda_1$, the inner optimization $\max_{w_0} \min_{\lambda_0 \geq \norm{G'\Pi_1(\lambda_1)G}} L_0$ is a sphere constrained quadratic maximization in $w_0$ with Lagrange multiplier $\lambda_0$. By \cref{prop:conquad}, strong duality holds, and we obtain
\begin{align*}
&\max_{w_0} \min_{\lambda_0 \geq \norm{G'\Pi_1(\lambda_1)G}} L_0(u_0,w_0,\lambda_0,\lambda_1) = \min_{\lambda_0 \geq \norm{G'\Pi_1(\lambda_1)G}} \max_{w_0} L_0(u_0,w_0,\lambda_0,\lambda_1)
\end{align*}
This equality holds pointwise for every $\lambda_1 \in \Lambda_1$. Substituting inside the outer minimization over $\lambda_1$, we obtain
\[
\min_{\lambda_1} \max_{w_0} \min_{\lambda_0} L_0 = \min_{\lambda_1} \min_{\lambda_0} \max_{w_0} L_0 = \min_{(\lambda_0,\lambda_1) \in \Lambda_2} \max_{w_0} L_0
\]
Therefore
\begin{align*}
\min_{u_0}\max_{w_0}\min_{(\lambda_0,\lambda_1)\in\Lambda_2} L_0 &= \min_{u_0}\min_{(\lambda_0,\lambda_1)\in\Lambda_2}\max_{w_0} L_0 = \min_{(\lambda_0,\lambda_1)\in\Lambda_2}\min_{u_0}\max_{w_0} L_0
\end{align*}

Note that $M_0(\lambda_0,\lambda_1)$ is invertible for all $(\lambda_0,\lambda_1) \in \Lambda_2$ by \cref{prop:range-inclusion-invertible} and Assumptions 2-3. For fixed $(\lambda_0,\lambda_1)\in\Lambda_2$, \cref{prop:sdparalt} gives
\[
\min_{u_0}\max_{w_0} L_0
=\frac{1}{2}\,x_0'\Pi_0(\lambda_0,\lambda_1)x_0+\frac{\alpha_0}{2}\lambda_0+\frac{\alpha_1}{2}\lambda_1
\]
where
\begin{align*}
\Pi_0(\lambda_0,\lambda_1) &\eqbyd Q+A'\Pi_1(\lambda_1)A - A'\Pi_1(\lambda_1)\!\begin{bmatrix}B & G\end{bmatrix}
M_0(\lambda_0,\lambda_1)^{-1} \begin{bmatrix}B'\\ G'\end{bmatrix}\!\Pi_1(\lambda_1) A
\end{align*}
From the same arguments that proved $\Pi_1(\lambda_1)\succeq0$ for $\lambda_1\in \Lambda_1$, we have $\Pi_0(\lambda_0,\lambda_1)\succeq0$ for $(\lambda_0,\lambda_1)\in\Lambda_2$. 

Define $\phi_0(\lambda_0,\lambda_1,x_0)\eqbyd \min_{u_0}\max_{w_0} L_0(u_0,w_0,\lambda_0,\lambda_1)$. We establish joint convexity in $(\lambda_0,\lambda_1,x_0)$ and coercivity in $(\lambda_0,\lambda_1)$.

\emph{Convexity and coercivity of $\phi_0(\lambda_0,\lambda_1,x_0)$.}
For fixed $w_0$, the function $(u_0,\lambda_0,\lambda_1,x_0) \mapsto L_0(u_0,w_0,\lambda_0,\lambda_1)$ is convex. Since $L_0 = \ell(x_0,u_0) + \phi_1(\lambda_1,x_1) - (\lambda_0/2)(\norm{w_0}^2-\alpha_0)$ with $x_1 = Ax_0+Bu_0+Gw_0$, we have: $\ell(x_0,u_0)$ is quadratic with $Q\succeq0$ and $R\succ0$, hence convex in $(x_0,u_0)$; the term $-(\lambda_0/2)(\norm{w_0}^2-\alpha_0)$ is affine in $\lambda_0$ for fixed $w_0$; and $\phi_1(\lambda_1,x_1)$ is jointly convex in $(\lambda_1,x_1)$ from the first step. Since $x_1$ is affine in $(x_0,u_0,w_0)$, the composition $\phi_1(\lambda_1,Ax_0+Bu_0+Gw_0)$ is jointly convex in $(\lambda_1,x_0,u_0,w_0)$ by \citet[§3.2.2]{boyd:vandenberghe:2004}. Hence $L_0$ is jointly convex in $(u_0,\lambda_0,\lambda_1,x_0)$ for fixed $w_0$. Define
\[
g_0(u_0,\lambda_0,\lambda_1,x_0)\eqbyd \max_{w_0} L_0(u_0,w_0,\lambda_0,\lambda_1)
\]
For $(\lambda_0,\lambda_1)\in\Lambda_2$, the domain condition $\lambda_0 I - G'\Pi_1(\lambda_1)G \succeq 0$ ensures $L_0$ is concave in $w_0$, hence the supremum over $w_0$ exists. Since $g_0$ is the pointwise supremum of convex functions in $(u_0,\lambda_0,\lambda_1,x_0)$, it is jointly convex by \citet[§3.2.3]{boyd:vandenberghe:2004} (see also \citet[Theorem 5.5]{rockafellar:1970}). The partial minimization
\[
\phi_0(\lambda_0,\lambda_1,x_0)\eqbyd \min_{u_0} g_0(u_0,\lambda_0,\lambda_1,x_0)
\]
preserves joint convexity in $(\lambda_0,\lambda_1,x_0)$ by \citet[§3.2.5]{boyd:vandenberghe:2004} (see also \citet[Theorem 5.3]{rockafellar:1970}).

For coercivity, note that $\phi_0(\lambda_0,\lambda_1,x_0)=(1/2)x_0'\Pi_0(\lambda_0,\lambda_1)x_0+(\alpha_0/2)\lambda_0+(\alpha_1/2)\lambda_1$. Since $\Pi_0(\lambda_0,\lambda_1)\succeq0$ and $\alpha_0, \alpha_1>0$, we have $\phi_0(\lambda_0,\lambda_1,x_0) \geq (\alpha_0/2)\lambda_0 + (\alpha_1/2)\lambda_1$, hence $\phi_0$ is coercive in $(\lambda_0,\lambda_1)$ on $\Lambda_2$.

\textbf{Third step: final optimization.} \newline
Finally, the outer minimization over $(\lambda_0,\lambda_1)\in\Lambda_2$ yields
\begin{align*}
&\min_{(\lambda_0,\lambda_1)\in\Lambda_2}\  \phi_0(\lambda_0,\lambda_1,x_0)=\min_{(\lambda_0,\lambda_1)\in\Lambda_2}\ 
\frac{1}{2\overline{\alpha}}\,x_0'\,\Pi_0(\lambda_0,\lambda_1)\,x_0 +\frac{1}{2\overline{\alpha}}\big(\alpha_0\lambda_0+\alpha_1\lambda_1\big)
\end{align*}
which is the optimization \eqref{eq:2stage-opt}. Existence of the optimal solution $(\lambda_0^*,\lambda_1^*)$ follows from the Weierstrass theorem: coercivity holds by the analysis in the second step, continuity holds because $M_0(\lambda_0,\lambda_1)$ is invertible for all $(\lambda_0,\lambda_1) \in \Lambda_2$ by \cref{prop:range-inclusion-invertible} and Assumptions 2-3, hence $\Pi_0(\lambda_0,\lambda_1)$ is continuous on $\Lambda_2$ as a composition of continuous operations, and $\Lambda_2$ is closed. 

Given the solution $(\lambda_0^*,\lambda_1^*)$, we have
\begin{align*}
L^*_0(\lambda_0^*,\lambda_1^*) &= V(x_0, \useq^*, \wseq^*) - \frac{\lambda_0^*}{2}(|w_0^*|^2 - \alpha_0) - \frac{\lambda_1^*}{2}(|w_1^*|^2 - \alpha_1) \\
&= \frac{1}{2}x_0'\Pi_0(\lambda_0^*,\lambda_1^*)x_0 + \frac{\alpha_0\lambda_0^* + \alpha_1\lambda_1^*}{2}
\end{align*}
and since $\wseq^*$ satisfies $|w_0^*|^2 = \alpha_0$ and $|w_1^*|^2 = \alpha_1$ from the constraints $w_0^* \in \bbW_0$ and $w_1^* \in \bbW_1$, we obtain
\[
V^*(x_0) = \min_{u_0}\max_{w_0 \in \bbW_0}\min_{u_1}\max_{w_1 \in \bbW_1} \frac{V(x_0, \useq, \wseq)}{\overline{\alpha}} = \frac{L^*_0(\lambda_0^*,\lambda_1^*)}{\overline{\alpha}}
\]
which is \eqref{2stage-cost}. 

Given $(\lambda_0^*,\lambda_1^*)$, items 1-2 follow from the closed form solutions at each stage via \cref{prop:sdparalt}. Item 4 follows from \cref{receq}.
\end{proof}

\begin{proof}[Proof of \cref{prop:ndstage}]
The proof follows by induction from \cref{prop:2stage-StDAR}.
Define
\begin{align*}
M_k(\boldsymbol{\lambda}_k) &\eqbyd \begin{bmatrix}
B'\Pi_{k+1}B+R & B'\Pi_{k+1}G \\
G'\Pi_{k+1}B & G'\Pi_{k+1}G-\lambda_k I
\end{bmatrix} \\
d_k &\eqbyd \begin{bmatrix}
B'\Pi_{k+1}A \\ G'\Pi_{k+1}A
\end{bmatrix}x_k
\end{align*}
where $\boldsymbol{\lambda}_k = (\lambda_k,\ldots,\lambda_{N-1})$ denotes the vector of multipliers from stage $k$ onward. We apply backward dynamic programming, solving each minmax subproblem at stage $k$ using the minimax theorem (\cref{th:minimax}) together with pointwise strong duality (\cref{prop:conquad}).

\textbf{Terminal stage $k=N-1$.} \newline
At the terminal stage, the analysis from the first step of \cref{prop:2stage-StDAR} applies directly with $\Pi_N = P_f$, yielding
\[
\phi_{N-1}(\lambda_{N-1},x_{N-1})=\frac{1}{2}\,x_{N-1}'\Pi_{N-1}(\lambda_{N-1})x_{N-1}+\frac{\alpha_{N-1}}{2}\lambda_{N-1}
\]
where $\Pi_{N-1}(\lambda_{N-1})$ is given by \eqref{stagerec1}. From \cref{receq}, $\Pi_{N-1}(\lambda_{N-1})\succeq0$ for $\lambda_{N-1} \geq \norm{G'P_fG}$. The set $\Lambda_1 = [\norm{G'P_fG},\infty)$ is convex. By the convexity and coercivity arguments in \cref{prop:2stage-StDAR}, $\phi_{N-1}(\lambda_{N-1},x_{N-1})$ is jointly convex in $(\lambda_{N-1},x_{N-1})$ and coercive in $\lambda_{N-1}$ on $\Lambda_1$.

\textbf{Inductive step: stage $k \in \{0,\ldots,N-2\}$.} \newline
At each stage $k \in \{0,\ldots,N-2\}$, assume by induction that $\Lambda_{N-k-1}$ is convex, $\phi_{k+1}(\boldsymbol{\lambda}_{k+1},x_{k+1})$ is jointly convex in $(\boldsymbol{\lambda}_{k+1},x_{k+1})$ and coercive in $\boldsymbol{\lambda}_{k+1}$ on $\Lambda_{N-k-1}$, and $\Pi_{k+1}(\boldsymbol{\lambda}_{k+1})\succeq0$ for all $\boldsymbol{\lambda}_{k+1}\in\Lambda_{N-k-1}$.

By the argument from \cref{prop:2stage-StDAR}, the inequality constraint $|w_k|^2 \leq \alpha_k$ can be replaced by the equality constraint $|w_k|^2 = \alpha_k$. For fixed $(x_k, u_k, \boldsymbol{\lambda}_{k+1})$ with $\boldsymbol{\lambda}_{k+1} \in \Lambda_{N-k-1}$ and $x_{k+1} = Ax_k+Bu_k+Gw_k$, the inner cost $\ell(x_k,u_k) + \phi_{k+1}(\boldsymbol{\lambda}_{k+1}, x_{k+1})$ contains the term $(1/2)w_k'G'\Pi_{k+1}(\boldsymbol{\lambda}_{k+1})Gw_k + w_k'G'\Pi_{k+1}(\boldsymbol{\lambda}_{k+1})(Ax_k+Bu_k)$ in $w_k$ (from $(1/2)x_{k+1}'\Pi_{k+1}(\boldsymbol{\lambda}_{k+1})x_{k+1}$), with the remaining terms independent of $w_k$. Since $\Pi_{k+1}(\boldsymbol{\lambda}_{k+1}) \succeq 0$ by induction, we have $G'\Pi_{k+1}G \succeq 0$ and this term is convex in $w_k$. By \citet[Corollary~32.3.2]{rockafellar:1970}, the maximum over $\{w_k: \norm{w_k}^2 \leq \alpha_k\}$ occurs on the boundary $\norm{w_k}^2 = \alpha_k$. The same conclusion holds after the inner minimization over $\boldsymbol{\lambda}_{k+1}$: by partial minimization \citep[§3.2.5]{boyd:vandenberghe:2004}, $\inf_{\boldsymbol{\lambda}_{k+1} \in \Lambda_{N-k-1}} \phi_{k+1}(\boldsymbol{\lambda}_{k+1}, x_{k+1})$ is convex in $x_{k+1}$, and composition with the affine mapping $x_{k+1} = Ax_k + Bu_k + Gw_k$ preserves convexity in $w_k$.

Form the Lagrangian function
\begin{align*}
L_k(u_k,w_k,\lambda_k,\boldsymbol{\lambda}_{k+1}) &\eqbyd \ell(x_k,u_k) +\frac12 x_{k+1}'\Pi_{k+1}(\boldsymbol{\lambda}_{k+1})x_{k+1}\\
&\quad +\frac{1}{2}\sum_{j=k+1}^{N-1}\alpha_j\lambda_j -\frac{\lambda_k}{2}(\norm{w_k}^2-\alpha_k)
\end{align*}
By \cref{prop:conlag}, the equality constrained problem equals the Lagrangian saddle point problem. Define
\[
\Lambda_{N-k} \eqbyd \{(\lambda_k,\boldsymbol{\lambda}_{k+1}): \boldsymbol{\lambda}_{k+1}\in\Lambda_{N-k-1}, \lambda_k \geq \norm{G'\Pi_{k+1}(\boldsymbol{\lambda}_{k+1})G}\}
\]
which is not a product set since the lower bound on $\lambda_k$ depends on $\boldsymbol{\lambda}_{k+1}$. We show $\Lambda_{N-k}$ is convex. From the closed form $\phi_{k+1}(\boldsymbol{\lambda}_{k+1}, x_{k+1}) = (1/2)x_{k+1}'\Pi_{k+1}(\boldsymbol{\lambda}_{k+1})x_{k+1} + (1/2)\sum_{j=k+1}^{N-1}\alpha_j\lambda_j$ and joint convexity of $\phi_{k+1}$ in $(\boldsymbol{\lambda}_{k+1}, x_{k+1})$ from the induction hypothesis, the map $(\boldsymbol{\lambda}_{k+1}, x_{k+1}) \mapsto x_{k+1}'\Pi_{k+1}(\boldsymbol{\lambda}_{k+1})x_{k+1}$ is jointly convex on $\Lambda_{N-k-1} \times \bbR^n$ (subtracting the term affine in $\boldsymbol{\lambda}_{k+1}$ preserves convexity). For each $y \in \bbR^q$, restricting to $x_{k+1} = Gy$ gives $\boldsymbol{\lambda}_{k+1} \mapsto y'G'\Pi_{k+1}(\boldsymbol{\lambda}_{k+1})Gy$ convex on $\Lambda_{N-k-1}$. Since $\Pi_{k+1}(\boldsymbol{\lambda}_{k+1}) \succeq 0$, we have $G'\Pi_{k+1}(\boldsymbol{\lambda}_{k+1})G \succeq 0$ and the induced 2-norm of a positive semidefinite matrix equals its largest eigenvalue, giving
\[
\norm{G'\Pi_{k+1}(\boldsymbol{\lambda}_{k+1})G} = \sup_{|y|=1} y'G'\Pi_{k+1}(\boldsymbol{\lambda}_{k+1})Gy
\]
The function $\boldsymbol{\lambda}_{k+1} \mapsto \norm{G'\Pi_{k+1}(\boldsymbol{\lambda}_{k+1})G}$ is therefore convex on $\Lambda_{N-k-1}$ as the pointwise supremum of convex functions \citep[§3.2.3]{boyd:vandenberghe:2004}. The set $\Lambda_{N-k}$ is the intersection of $\Lambda_{N-k-1} \times \bbR$ and $\{(\lambda_k, \boldsymbol{\lambda}_{k+1}) : \lambda_k \geq \norm{G'\Pi_{k+1}(\boldsymbol{\lambda}_{k+1})G}\}$. The first set is convex by the induction hypothesis; the second is convex by \citet[§3.1.7]{boyd:vandenberghe:2004} since $\norm{G'\Pi_{k+1}(\boldsymbol{\lambda}_{k+1})G}$ is convex in $\boldsymbol{\lambda}_{k+1}$. Hence $\Lambda_{N-k}$ is convex as the intersection of two convex sets. By the arguments in the second step of \cref{prop:2stage-StDAR}, we first eliminate $\lambda_k$ by defining the reduced function
\[
\psi_k(u_k, w_k, \boldsymbol{\lambda}_{k+1}) \eqbyd \inf_{\lambda_k \geq \norm{G'\Pi_{k+1}(\boldsymbol{\lambda}_{k+1})G}} L_k(u_k,w_k,\lambda_k,\boldsymbol{\lambda}_{k+1})
\]
The function $\psi_k$ is extended-valued with $\psi_k = -\infty$ for $\norm{w_k}^2 > \alpha_k$ and finite for $\norm{w_k}^2 \leq \alpha_k$. By \citet[§3.2.3]{boyd:vandenberghe:2004}, $\psi_k$ is concave in $w_k$ and convex in $\boldsymbol{\lambda}_{k+1}$. The minimax theorem (\cref{th:minimax}) yields the interchange between $w_k$ and $\boldsymbol{\lambda}_{k+1}$, and pointwise strong duality (\cref{prop:conquad}) for each fixed $\boldsymbol{\lambda}_{k+1}$ yields the interchange between $w_k$ and $\lambda_k$. Therefore
\begin{align*}
\min_{u_k}\max_{w_k}\min_{(\lambda_k,\boldsymbol{\lambda}_{k+1})\in\Lambda_{N-k}} L_k &= \min_{u_k}\min_{(\lambda_k,\boldsymbol{\lambda}_{k+1})\in\Lambda_{N-k}}\max_{w_k} L_k = \min_{(\lambda_k,\boldsymbol{\lambda}_{k+1})\in\Lambda_{N-k}} \min_{u_k}\max_{w_k} L_k
\end{align*}

Note that $M_k(\boldsymbol{\lambda}_k)$ is invertible for all $(\lambda_k,\boldsymbol{\lambda}_{k+1}) \in \Lambda_{N-k}$ by \cref{prop:range-inclusion-invertible} and Assumptions 2-3. For fixed $(\lambda_k,\boldsymbol{\lambda}_{k+1})\in\Lambda_{N-k}$, \cref{prop:sdparalt} gives
\[
\min_{u_k}\max_{w_k} L_k
=\frac{1}{2}\,x_k'\Pi_k(\boldsymbol{\lambda}_k)x_k+\frac{1}{2}\sum_{j=k}^{N-1}\alpha_j\lambda_j
\]
where $\Pi_k(\boldsymbol{\lambda}_k)$ is given by \eqref{stagerec1}. From the same arguments that proved $\Pi_{k+1}(\boldsymbol{\lambda}_{k+1})\succeq0$, we have $\Pi_k(\boldsymbol{\lambda}_k)\succeq0$ for $\boldsymbol{\lambda}_k\in\Lambda_{N-k}$. 

Define $\phi_k(\boldsymbol{\lambda}_k,x_k)\eqbyd \min_{u_k}\max_{w_k} L_k(u_k,w_k,\lambda_k,\boldsymbol{\lambda}_{k+1})$. By the convexity arguments from \cref{prop:2stage-StDAR}, $\phi_k(\boldsymbol{\lambda}_k,x_k)$ is jointly convex in $(\boldsymbol{\lambda}_k,x_k)$. For coercivity, $\phi_k(\boldsymbol{\lambda}_k,x_k) \geq (1/2)\sum_{j=k}^{N-1}\alpha_j\lambda_j$, hence $\phi_k$ is coercive in $\boldsymbol{\lambda}_k$ on $\Lambda_{N-k}$.

\textbf{Final optimization.} \newline
By induction, we obtain the recursion \eqref{stagerec1} for $k \in \{0,1,\ldots,N-1\}$ with terminal condition $\Pi_N = P_f$, and the remaining optimization is \eqref{lst} with $\boldsymbol{\lambda}_0$ ranging over $\Lambda_N$. The set $\Lambda_N$ is convex by induction and the objective $\phi_0(\boldsymbol{\lambda}_0, x_0)$ is jointly convex in $(\boldsymbol{\lambda}_0, x_0)$, so \eqref{lst} is a convex optimization. Existence of the optimal solution $\boldsymbol{\lambda}_0^* = (\lambda_0^*,\ldots,\lambda_{N-1}^*)$ follows from the Weierstrass theorem: coercivity holds by the analysis above, continuity holds because $M_k(\boldsymbol{\lambda}_k)$ is invertible for all $\boldsymbol{\lambda}_k \in \Lambda_{N-k}$ by \cref{prop:range-inclusion-invertible} and Assumptions 2-3, hence $\Pi_k(\boldsymbol{\lambda}_k)$ is continuous on $\Lambda_{N-k}$ as a composition of continuous operations, and $\Lambda_N$ is closed.

Given $\boldsymbol{\lambda}_0^*$, items 1--3 follow from the closed form solutions at each stage via \cref{prop:sdparalt}. Item 4 follows from \cref{receq}.
\end{proof}

\end{document}